\documentclass[letterpaper,11pt]{article}

\usepackage{fullpage}




\usepackage{amsmath}
\usepackage{amsfonts}
\usepackage{amssymb}
\usepackage{amsthm}
\usepackage{amsbsy}
\usepackage{graphicx}
\usepackage{caption}
\usepackage{subcaption}
\usepackage{verbatim}
\usepackage{url}
\usepackage{bbm}
\usepackage{multirow}
\usepackage{bm}

\usepackage{tikz}
\usetikzlibrary{shapes,snakes}
\usetikzlibrary{decorations}

\usepackage[ruled]{algorithm2e}

\SetArgSty{textrm}  

\usepackage{ctable} 

\newtheorem{theorem}{Theorem}
\newtheorem{lemma}[theorem]{Lemma}

\newtheorem{cor}[theorem]{Corollary}

\theoremstyle{definition}
\newtheorem{example}{Example}

\newcommand{\bigO}{\mathcal{O}}

\newcommand{\s}{\mathbf{s}}
\newcommand{\tbold}{\mathbf{t}}
\newcommand{\kbold}{\mathbf{k}}

\newcommand{\score}{\text{\texttt{sc}}}
\newcommand{\val}{\text{\texttt{val}}}
\newcommand{\g}{\text{\texttt{g}}}
\newcommand{\poly}{\text{poly}}
\newcommand{\one}[1]{\mathbbm{1}\left\{#1\right\}}


%

\sloppy

\usepackage{authblk}

\begin{document}

\title{\bf Optimizing positional scoring rules \\ for rank aggregation\thanks{This work was partially supported by Caratheodory research grant E.114 from the University of Patras, by a PhD scholarship from the Onassis Foundation, and by the European Research Council (ERC) under grant number 639945 (ACCORD).}}

\author{Ioannis Caragiannis \quad Xenophon Chatzigeorgiou \quad George A. Krimpas}
\affil{\em Department of Computer Engineering and Informatics, University of Patras, Greece}

\author{Alexandros A. Voudouris}
\affil{\em Department of Computer Science, University of Oxford, UK}

\date{}

\maketitle

\begin{abstract}
Nowadays, several crowdsourcing projects exploit social choice methods for computing an aggregate ranking of alternatives given individual rankings provided by workers. Motivated by such systems, we consider a setting where each worker is asked to rank a fixed (small) number of alternatives and, then, a positional scoring rule is used to compute the aggregate ranking. Among the apparently infinite such rules, what is the best one to use? To answer this question, we assume that we have partial access to an underlying true ranking. Then, the important optimization problem to be solved is to compute the positional scoring rule whose outcome, when applied to the profile of individual rankings, is as close as possible to the part of the underlying true ranking we know. We study this fundamental problem from a theoretical viewpoint and present positive and negative complexity results and, furthermore, complement our theoretical findings with experiments on real-world and synthetic data.
\end{abstract}

\section{Introduction}
Social choice theory \cite{BC+16} studies voting rules (also known as social choice or social welfare functions) that compute a winning alternative or a ranking of the available alternatives from voter preferences. Typically, the preference of each voter is supposed to be a ranking over {\em all} available alternatives. We deviate from this assumption and, instead, we focus our attention to settings in which each voter (or, better, agent for our purposes) ranks only a small subset of the alternatives. Such {\em incomplete} rankings seem to be non-standard in the literature; the papers \cite{WGS16,DKNS01,S07} are some notable exceptions.

Our adoption of incomplete rankings is motivated by {\em crowdsourcing}~\cite{LA11} and {\em rating applications}. For example, assume that a requester would like to rank a huge set of alternatives using expert opinions from a crowd of workers. Asking each worker for her opinion on the whole set of alternatives (i.e., for a full ranking of them) could be problematic since, most probably, the worker will not be aware of most of the alternatives. Even if she tries to obtain additional information, coming up with consistent comparisons between alternatives that she knows well and alternatives that she has no idea about, would be rather impossible, given their huge number. Instead, this task would be much easier if workers focused on small fixed sets of alternatives. The requester could give each worker a different set of few alternatives to rank. Then, processing smaller inputs and merging them to come up with a global ranking of all alternatives would be easier for the requester as well. 

This approach has been recently exploited in the context of ordinal peer grading in massive open online courses (MOOCs); see the papers \cite{ALMRW16,CKV15,CKV16,RJ14,SBP+13,SW17} for approaches of this flavor. In such settings, the task of grading an exam with many participating students is outsourced to the students themselves. Each student is given a small number of exam papers of other students to rank, and the final grading (a ranking of all students) is obtained by aggregating the inputs provided by the students. Ordinal peer grading has also been used in the evaluation of proposals for research funding, e.g., by the Sensors and Sensing Systems (SSS) program of NSF in 2013 \cite{H13}, using a Borda-like method proposed earlier by Merrifield and Saari~\cite{MS09} (see also~\cite{KLMP15}).

An example of the rating application that we envision is as follows. Users of a hotel booking system are asked to rank hotels in a specific city, in which they have recently stayed. The goal of the system is to compute a full ranking of the hotels (or, possibly, different rankings depending on different relevant criteria, such as price, cleanliness, location, etc.) that can help new users. Clearly, each user can provide meaningful feedback for just a few hotels. Again, in this scenario, the system might ask each user to focus only on a subset of the hotels she knows. Similar examples of rating applications include systems related to ranking restaurants, universities, and so on.

Besides the different sets of alternatives each individual is asked to rank in the above scenarios, another implicit feature is that there is an {\em underlying true ranking} of all alternatives (e.g., the ranking of exam papers in terms of their quality or the ranking of hotels in terms of their facilities) that we would like to compute when aggregating the individual preferences. Can we do so using simple voting-like rules? We follow an {\em optimization approach} which can be described with the following question: 
\begin{quote}
Assuming that we have partial knowledge of the underlying true ranking and access to sampled profiles, which is the rule that yields an outcome that is as consistent as possible to (our partial knowledge of) the underlying true ranking when applied to the sampled profiles? 
\end{quote}

We study the above question for {\em positional scoring rules} (or, simply, scoring rules), which have played a central role in social choice theory. Two factors that have led to this decision are their simplicity and effectiveness; simplicity follows by their definition and effectiveness is justified by our experimental results. In particular, we consider settings in which each agent is asked to rank the same number $d$ of alternatives; this is consistent to the ordinal peer grading approach as it has been applied in MOOCs \cite{CKV16,RJ14} or in the NSF pilot \cite{H13}. A positional scoring rule in our setting is defined by a scoring vector $(s_1, s_2, ..., s_d)$. It takes as input the incomplete individual rankings of the agents and computes scores for alternatives as follows. An alternative gets $s_k$ points each time it is ranked $k$-th by an agent and its score is its total number of points. The final ranking is obtained by ordering all alternatives in terms of their scores, in non-increasing order. 

The input of our problem consists of a profile of individual incomplete rankings and desired relations for pairs of alternatives (to be thought of as parts of the underlying true ranking) with corresponding weights (indicating the importance of each relation). Given this input, we would like to compute the positional scoring rule, whose outcome ---when applied on the profile--- maximizes the total weight of the desired pairwise relations it satisfies. We refer to this seemingly fundamental optimization problem as {\sf OptPSR}, standing for ``Optimizing Positional Scoring Rules''. 

\subsection{Our contribution}
Our technical contribution consists of theoretical and experimental results for {\sf OptPSR}. We begin by presenting an exact algorithm, which we call \texttt{Regions}, that solves {\sf OptPSR} in time that depends exponentially only on the parameter $d$. Hence, \texttt{Regions} runs in polynomial time when $d$ is constant. For instances with high values of $d$, we have two approximation algorithms. The first one, which we call \texttt{BestApproval}, searches among the class of $t$-approval voting rules (that use scoring vectors with $t$ $1$s followed by $d-t$ $0$s, with $t \in [d]$) and returns the one that satisfies constraints of highest total weight. The solutions returned by \texttt{BestApproval} are always at least $1/d$--approximate. This means that the total weight of the satisfied constraints is at least $1/d$ times the maximum total weight of constraints that can be simultaneously satisfied by any scoring rule. We show that our analysis is tight by constructing simple instances, in which any approval voting rule is (at most) $1/d$-approximate. We also present a second, more sophisticated, approximation algorithm, called \texttt{ApxPSR}, which achieves even better approximation ratios at the expense of higher (but still polynomial) running time. On the negative side, we show that {\sf OptPSR} is not only computationally hard (in particular, NP-hard) to compute exactly but also NP-hard to approximate. We present an explicit inapproximability bound of $23/24$; our proof is based on an approximation-preserving reduction from the optimization problem MAX-3LIN-2 of maximizing the number of satisfied equations in an over-determined system of linear equations modulo $2$, and exploits a famous inapproximability result due to H{\aa}stad \cite{H01}. 

Further, we describe experiments from the execution of scoring rules/algorithms on many {\sf OptPSR} instances. We use two real-world profiles, which we have carefully collected, as well as numerous synthetic profiles that are produced by simulating agents whose ranking behavior follows the Bradley-Terry \cite{BT52} and Plackett-Luce~\cite{L59,P75} noise models. In contrast to our theoretical work, which is based on worst-case assumptions, our experimental results show that well-known scoring rules as well as our algorithms \texttt{ApxPSR} and \texttt{BestApproval} perform remarkably well and recover almost $100\%$ of the desired constraints in all scenarios we examined; this justifies our choice to study the optimization problem {\sf OptPSR} in the first place. 

\subsection{Related work}
Social choice theory has traditionally assumed that voters provide full rankings (strict linear orders) over all alternatives. There are some deviations from this tradition that have been attempted recently. Among other issues, Boutilier and Rosenschein~\cite{BR16} discuss settings with incomplete rankings as votes. In general, these models belong to one of the following categories. Several papers (see, e.g., \cite{BFL12,DLC15,FO14}) consider voters who rank their few top preferred alternatives. Others, like the current paper, consider voters who rank arbitrary subsets of alternatives. These include the papers \cite{WGS16,DKNS01,S07}, in which voters rank alternatives they know. For example, in \cite{DKNS01}, the alternatives are web pages and the voters are different search engines, which do not necessarily have full coverage of the web. In the papers on ordinal peer grading mentioned above \cite{ALMRW16,CKV15,CKV16,KLMP15, MS09,RJ14,SBP+13,SW17}, the ``voters'' are asked to rank particular subsets of alternatives. In this sense, even though they are in principle capable of evaluating other alternatives, they are never asked to do so and, hence, do not include them in their rankings. In the most general setting, each vote can be a {\em partial order} of all alternatives. This includes the cases mentioned above as well as the case in which some relations between pairs of alternatives are not given. Konczak and Lang~\cite{KL05} were the first to consider this setting in computational social choice, followed by Pini et al.~\cite{PRVW11}, Xia and Conitzer~\cite{XC11a}, and others. 

Important problems in the setting with partial orders of alternatives as votes are related to how the votes can be extended to full rankings. The papers~\cite{KL05,PRVW11,XC11a} mentioned above consider the question of whether there exists an extension of a profile of partial orders so that a given alternative becomes the winner. The question of whether a given alternative is the winner in {\em any} profile extension has also been studied. Both questions give rise to elegant computational problems, which are informally known as {\em possible} and {\em necessary winners}, respectively. Lu and Boutilier \cite{LB11a} view the selection among different possible winners as a robust optimization problem and use the notion of minimax regret to solve it. Their techniques extend to multi-winner determination~\cite{LB13}. However, all these papers focus on the winning alternative(s), while our interest here is on the whole ranking returned by a voting rule that is applied on profiles with incomplete votes.

The existence of an underlying true ranking is a central assumption in our work. This is closely related to a trend in social choice, which assumes an objectively correct ranking of the alternatives (a {\em ground truth}) and views votes as noisy estimates of this ranking. The most common approach in such settings aims to view voting rules as {\em maximum likelihood estimators} \cite{CS05,Y88} and start with the assumption that each voter implicitly transforms the ground truth into her vote following a particular probability distribution or noise model. A voting rule is an MLE for a noise model if, when applied on a profile of votes, it returns as an outcome the ranking or the winning alternative that is the most likely to produce this profile, assuming that voters follow the noise model. The most prominent such result is due to Young \cite{Y88}, who proved that the Kemeny voting rule is the MLE for a noise model that dates back to Marquis de Condorcet \cite{C85} and is today better known as the Mallows' model \cite{M57}. A discussion on recent results on the MLE approach can be found in the chapter by Elkind and Slinko \cite{ES16}. Among them, Xia and Conitzer \cite{XC11} and Lu and Boutilier~\cite{LB11} use the MLE approach to voting rules applied on profiles with incomplete votes. A related line of research aims to establish sample complexity results. How many votes (from a given noise model) are necessary in order to recover the ground truth with high probability? The papers \cite{BM08,CPS16,CK14} follow this direction. 

Another approach, which is even closer to the current paper, has an optimization flavor. The papers \cite{CKV15,CKV16} of our group on ordinal peer grading as well as the paper by de Weerdt et al. \cite{WGS16} aim to identify the voting rule whose outcome ranking has as small expected distance from the ground truth ranking as possible. In general, these voting rules are not maximum likelihood estimators. In contrast to these papers as well as to those following the MLE approach, here we assume that we have access to parts of the underlying ground truth. Furthermore, in our theoretical investigations, we do not exploit the fact that votes may be noisy estimates of some ground truth but, instead, treat them as arbitrary; this gives rise to several optimization challenges. On the other hand, our experimental scenarios use ground truth rankings and voters (agents) that follow two well-known noise models.

Finally, our optimization problem {\sf OptPSR} aims to compute the positional scoring rule that best fits the input. This is conceptually related to learning-theoretic studies that seek a scoring rule that is consistent to given examples; e.g., see the papers by Boutilier et al. \cite{BCH+15} and Procaccia et al. \cite{PZ+09} which, among other results, study the sample complexity of scoring rules by bounding their generalized dimension. Like our theoretical investigations, their settings are more general and do not depend on particular noise models.

\subsection{Roadmap}
The rest of the paper is structured as follows. We begin with the formal description of the {\sf OptPSR} problem and necessary preliminary material in Section~\ref{sec:problem}, including definitions, an example, as well as the description of a naive exact algorithm for {\sf OptPSR}. In Section~\ref{sec:constant}, we present and analyze our exact algorithm \texttt{Regions}. Our approximation algorithms \texttt{BestApproval} and \texttt{ApxPSR} are presented and analyzed in Section~\ref{sec:apx-approval}. In Section~\ref{sec:hard}, we give the proof of our inapproximability result for {\sf OptPSR}. Our experiments follow in Section~\ref{sec:exp}. We conclude with a discussion on open problems and possible extensions in Section \ref{sec:open}. Additional material related to our experiments is given in Appendix.

\section{Problem statement}\label{sec:problem}
We consider settings with a set $N$ of {\em agents} and a set $A$ of {\em alternatives}. Agent $i$ expresses her preference over a subset $A_i\subseteq A$ of alternatives; her preference (or vote) is a strict linear order (henceforth, simply, a {\em ranking}) of the alternatives in $A_i$. A preference profile (or simply, a {\em profile}) $\Pi$ consists of the preferences of all agents. In this work, we assume that all agents have the {\em same} number $d\geq 2$ of alternatives in their preferences, i.e., $|A_i|=d$ for each agent~$i$.

A social welfare function takes as input a profile $\Pi$ and it outputs a ranking of all alternatives in $A$. A {\em positional scoring rule} (or, simply, a {\em scoring rule}) is a social welfare function that uses a scoring vector $\s = (s_1, ..., s_d)$ with $s_i\geq s_{i+1}$ for $i=1, ..., d-1$ and $s_d\geq 0$; the alternative at position $k$ in each vote is assigned $s_k$ points and the ranking of the alternatives is produced by ordering them in monotone non-increasing order in terms of their total points (or {\em score}). 
In the following, with some abuse in notation, we use $\s$ to refer to both the scoring vector and the corresponding scoring rule that uses it.
Formally, for an alternative $x$, let $\nu_j(x,\Pi)$ denote the number of agents that rank $x$ at position $j$ in profile $\Pi$. Then, given a scoring rule $\s$, the score of alternative $x$ is defined as 
$$\score_{\s}(x,\Pi) = \sum_{j=1}^d{\nu_j(x,\Pi) \cdot s_j}.$$
We remark that this score definition does not take into account the popularity of alternative $x$ (i.e., the number of times $x$ appears in the rankings of the profile). Another definition would normalize the score by dividing with the number of appearances of $x$ in the profile. We have chosen the current definition for proof-of-concept purposes only.

We also assume that we have access to a set of {\em constraints} $C$ that represents our (possibly partial) knowledge to an objective set of pairwise relations between the alternatives. Each constraint in $C$ is given by an ordered pair of alternatives $(x,y)$ and requires that alternative $x$ is ranked higher than alternative $y$ in the outcome of the scoring rule $\s$. For a pair of alternatives $(x,y)$, let $\delta_j(x,y,\Pi) = \nu_j(x,\Pi) - \nu_j(y,\Pi)$. Now, observe that, in order for alternative $x$ to be ranked above $y$ with certainty in the final ranking, it must be $\score_\s(x,\Pi) > \score_\s(y,\Pi)$ and, equivalently,
$$\sum_{j=1}^d {\delta_j(x,y,\Pi) \cdot s_j}> 0.$$
Using $\boldsymbol{\delta}(x,y,\Pi) = (\delta_1(x,y,\Pi), ..., \delta_d(x,y,\Pi))$, the above expression can be compactly written as the dot product $\boldsymbol{\delta}(x,y,\Pi) \cdot \s>0$. 
For our purposes, instead of thinking of a profile $\Pi$ as the set of rankings provided by the agents, it is convenient to describe it using the quantities $\boldsymbol{\delta}(x,y,\Pi)$ for every constraint $(x,y)$ in $C$; we use the notation $\boldsymbol{\delta}(\Pi)$ to denote the set of these quantities and we will simply refer to it as the profile. Each constraint $(x,y)\in C$ has a corresponding non-negative weight $w(x,y)$, which indicates the importance of the constraint. 

Now, problem {\sf OptPSR} (standing for ``optimizing positional scoring rules'') is defined as follows. We are given a profile $\boldsymbol{\delta}(\Pi)$ and a set $C$ of constraints. The goal of {\sf OptPSR} is to find the scoring rule $\s$ that produces a ranking of all alternatives so that the total weight (or gain)
$$\g(\s,\boldsymbol{\delta}(\Pi),C) = \sum_{(x,y)\in C}w(x,y) \cdot \one{\boldsymbol{\delta}(x,y,\Pi) \cdot \s >0},$$
of satisfied constraints is maximized. The quantity $\one{X}$ takes value $1$ if $X$ is true and $0$ otherwise. 

\begin{example}\label{example}
Consider ten agents, a set of seven alternatives $A = \{x_1,x_2,x_3,x_4,x_5,x_6,x_7\}$, the profile of rankings of size $d=4$ that appears in Table~\ref{example:profile}, and the constraints that appear in Table~\ref{example:delta-profile} together with the corresponding weights and the representation $\boldsymbol{\delta}(\Pi)$ of the profile. 

First, observe that the first three constraints cannot be satisfied simultaneously; this can be easily seen since by summing the corresponding inequalities we obtain $-s_2 + s_3 - 8s_4 > 0$ which contradicts $s_2 \geq s_3$ and $s_4 \geq 0$. So, in the best case, the optimal scoring rule can satisfy all constraints besides the one among the first three that has the minimum weight. One such scoring rule uses the scoring vector $(4,4,1,0)$ and satisfies all constraints except the second one for a total gain of $10$. In contrast, well-known scoring rules such as the Borda count that uses the scoring vector $(3,2,1,0)$ and the $t$-approval rules, with $t \in [4]$, that use scoring vectors with $t$ ones followed by $4-t$ zeros, satisfy constraints of total gain equal to $7$, $8$ (for $t=1$), $8$ (for $t=2$), $9$ (for $t=3$), and $4$ (for $t=4$), respectively; see also Table~\ref{example:rules-gains}. \hfill$\qed$

\begin{table}[h!]
\centering
\begin{tabular}{c c}
\noalign{\hrule height 1pt}\hline
\# of agents & ranking 	\\\hline
$1$ & $x_1 \succ x_2 \succ x_6 \succ x_4$\\
$2$ & $x_3 \succ x_1 \succ x_4 \succ x_2$\\
$2$ & $x_5 \succ x_6 \succ x_4 \succ x_2$\\
$1$ & $x_7 \succ x_3 \succ x_4 \succ x_2$\\
$1$ & $x_5 \succ x_4 \succ x_3 \succ x_6$\\
$3$ & $x_7 \succ x_5 \succ x_4 \succ x_2$\\
\noalign{\hrule height 1pt}\hline
\end{tabular}\normalsize
\caption{The profile $\Pi$ of agent rankings in Example~\ref{example}. The notation $\succ$ is used here to represent the preference of the agents. For instance, according to the second row of the table, there are two agents that rank alternative $x_3$ first, alternative $x_1$ second, $x_4$ third, and $x_2$ last.}
\label{example:profile}
\end{table} 

\begin{table}[h!]
\centering
\begin{tabular}{c c c c c c c}
\noalign{\hrule height 1pt}\hline
constraint & weight & $\delta_1(\cdot,\Pi)$ & $\delta_2(\cdot,\Pi)$ & $\delta_3(\cdot,\Pi)$ & $\delta_4(\cdot,\Pi)$ & inequality \\\hline
$(x_1,x_2)$ & $4$ & $1$ & $1$ & $0$ & $-8$ & $s_1 + s_2 - 8s_4 > 0$ \\
$(x_4,x_5)$ & $2$ & $-3$ & $-2$ & $8$ & $1$ & $-3s_1 -2s_2 + 8s_3 + s_4 > 0$\\
$(x_3,x_4)$ & $3$ & $2$ & $0$ & $-7$ & $-1$ & $2s_1 -7s_3 -s_4 > 0$\\
$(x_4,x_6)$ & $2$ & $0$ & $-1$ & $7$ & $0$ & $-s_2 + 7s_3 > 0$\\
$(x_3,x_2)$ & $1$ & $2$ & $0$ & $1$ & $-8$ & $2s_1 + s_3 -8s_4 > 0$\\
\noalign{\hrule height 1pt}\hline
\end{tabular}\normalsize
\caption{The constraints, the corresponding weights, the alternative representation of the profile $\Pi$ using the quantities $\boldsymbol{\delta}(x,y,\Pi)$, and the induced inequalities used in Example~\ref{example}. For instance, the first constraint of weight $4$ requires that alternative $x_1$ appears above $x_2$ in the final ranking. According to the profile $\Pi$ in Table~\ref{example:profile}, since there are two agents that place alternative $x_1$ in the second position, while there is only one agent that places alternative $x_2$ in the second position, we have that $\delta_2(x_1,x_2,\Pi)=1$; one can easily verify the remaining values of the table. Given these quantities, the inequalities follow since $\delta_j(x_1,x_2,\Pi)$ for $j \in [4]$ is the coefficient of the variable $s_j$ corresponding to the points assigned to position $j$.}
\label{example:delta-profile}
\end{table} 
\end{example}

\begin{table}[h!]
\centering
\begin{tabular}{c c c}
\noalign{\hrule height 1pt}\hline
rule & scoring vector & gain 	\\\hline
opt & $(4,4,1,0)$ & $10$ \\
Borda count & $(3,2,1,0)$ & $7$ \\
$1$-approval & $(1,0,0,0)$ & $8$ \\
$2$-approval & $(1,1,0,0)$ & $8$ \\
$3$-approval & $(1,1,1,0)$ & $9$ \\
$4$-approval & $(1,1,1,1)$ & $4$ \\
\noalign{\hrule height 1pt}\hline
\end{tabular}\normalsize
\caption{The positional scoring rules considered in Example~\ref{example} and their corresponding total gains.}
\label{example:rules-gains}
\end{table}

Let us now give an equivalent view of {\sf OptPSR}. A scoring rule $\s$ can be thought of as a point in $\mathbb{R}^d$, and, in particular, in the region $R_0$ of $\mathbb{R}^d$ formed by the inequalities $s_i-s_{i+1}\geq 0$ for $i=1, ..., d-1$ and $s_d\geq 0$ that define all valid scoring vectors. We can define subregions of $R_0$ by considering any subset $C'\subseteq C$ of constraints and the inequality $\boldsymbol{\delta}(x,y,\Pi)\cdot \s >0$ for every constraint associated with the pair of alternatives $(x,y)\in C'$ and the inequality $\boldsymbol{\delta}(x,y,\Pi)\cdot \s \leq 0$ for every constraint $(x,y)\in C\setminus C'$. In this way, the collection of all subsets of constraints in $C$ partition $R_0$ into disjoint subregions; of course, some of them may be infeasible. Hence, in order to maximize $\g(\s,\boldsymbol{\delta}(\Pi),C)$, it suffices to find any point $\s$ in the non-empty subregion of $R_0$ that satisfies the subset of constraints with maximum total weight. 

To do so, we can enumerate all subsets of constraints of $C$, check feasibility of the corresponding regions using linear programming, and report any point in the subregion that yields the highest gain. This algorithm takes time polynomial in 
$2^{|C|}$ and $d$, assuming that it receives $\boldsymbol{\delta}(\Pi)$ and $C$ as input. In practice, the parameter $d$ (i.e., the size of the input rankings) is expected to be a small constant, while the number of alternatives and, consequently, the number of constraints $|C|$ would be much larger. Hence, an algorithm that runs in time exponential in $|C|$ is clearly impractical.
In the next section, we will present an algorithm that uses a more clever enumeration of the feasible subregions in order to get the one that yields the maximum gain.

\section{An improved {\sf OptPSR} algorithm}\label{sec:constant}

We will present another (exact) {\sf OptPSR} algorithm which we call \texttt{Regions}. Its running time depends exponentially only on the parameter $d$ and, hence, is polynomial when $d$ is a constant. We remark that this time complexity is interesting only in theory. As we will mention later, when describing our experiments in Section~\ref{sec:exp}, even when $d$ is small (say, equal to $6$), the algorithm does not scale well with the number of constraints.

\texttt{Regions} computes a pool of non-empty subregions of $R_0$, each of which satisfies a different subset of constraints. Initially, the pool consists only of region $R_0$, and is updated as new constraints of $C$ are considered. When a new constraint is considered, each region in the pool can be split into two subregions consisting of the points that satisfy the constraint and the points that do not satisfy it, respectively. When all points of a region satisfy the constraint or all points of the region do not satisfy it, then the region is not split and is retained as a whole in the pool.

In particular, the algorithm considers the constraints of $C$ one by one. At each step $t$, a pool ${\cal P}$ of regions is kept; at the beginning of each step, all regions in the pool are active. For each region $R$ in ${\cal P}$, the algorithm keeps the gain $\val(R)$ that is obtained by the constraints which have been considered until step $t$ and are satisfied by scoring vectors of region $R$. The algorithm begins its execution having only region $R_0$ in the pool. When a new constraint $(x,y)$ with weight $w(x,y)$ is considered, the algorithm attempts to update each active region $R$ of ${\cal P}$ as follows. It defines the candidate regions $R^{xy}$ and $R^{\neg xy}$ such that
\begin{itemize}
\item $R^{xy}$ is defined by the inequalities that form $R$ together with inequality $\boldsymbol{\delta}(x,y,\Pi)\cdot \s >0$ (which defines the set of points that satisfy constraint $(x,y)$), and
\item $R^{\neg xy}$ is defined by the inequalities that form $R$ together with inequality $\boldsymbol{\delta}(x,y,\Pi)\cdot \s \leq 0$ (which defines the set of points that do not satisfy constraint $(x,y)$).
\end{itemize}
If both $R^{xy}$ and $R^{\neg xy}$ are non-empty (i.e., the corresponding sets of inequalities are feasible), the algorithm includes both $R^{xy}$ and $R^{\neg xy}$ in ${\cal P}$ as inactive, sets their gains $\val(R^{xy}):=\val(R)+w(x,y)$ and $\val(R^{\neg xy}):=\val(R)$, and removes region $R$ from the pool. If only $R^{xy}$ is feasible (and $R^{\neg xy}$ is infeasible), $\val(R)$ is increased by $w(x,y)$. If only $R^{\neg xy}$ is feasible (and $R^{xy}$ is infeasible), the algorithm does nothing. In the last two cases, no new region is added to the pool. Clearly, it cannot be the case that both $R^{xy}$ and $R^{\neg xy}$ are infeasible. Again, feasibility can be checked efficiently by solving linear programs with $d$ variables and up to $|C|$ constraints. At the end of step $t$ (i.e., when there is no other active region in the pool to be considered), the inactive regions become active and the algorithm proceeds with step $t+1$. When all constraints of $C$ have been considered, the algorithm computes the active region $R^*$ with maximum $\val(R^*)$ and returns any scoring vector in $R^*$. A description of the algorithm in pseudocode is given as Algorithm~\ref{alg:regions}.

\begin{algorithm}[h!]
    \SetAlgoLined
    \KwIn{parameter $d$, profile $\boldsymbol{\delta}(\Pi)$, set $C$ of constraints}
    \KwOut{a scoring vector $\s = (s_1, ..., s_d)$}
	$R_0 := \{s_1 \geq s_2, ..., s_{d-1}\geq s_d, s_d \geq 0\}$   \\
	$\val(R_0) := 0$ \\
    ${\cal P} := \{R_0\}$ \\
	\For{$(x,y) \in C$}{
		$\text{active} := {\cal P}$ \\
		\For{$R \in \text{active}$}{
			$\text{active} := \text{active}\setminus \{R\}$	\\
			$R^{xy} := \{R, \boldsymbol{\delta}(x,y,\Pi)\cdot \s > 0 \}$ \\
			$R^{\neg xy} := \{R, \boldsymbol{\delta}(x,y,\Pi)\cdot \s \leq 0 \}$ \\
			\uIf{$R^{xy}$ is feasible and $R^{\neg xy}$ is feasible}{
				$\val(R^{xy}) := \val(R) + w(x,y)$ \\
				$\val(R^{\neg xy}) := \val(R)$ \\
				${\cal P} := \{{\cal P} \setminus \{R\} \cup\{ R^{xy}, R^{\neg xy}\} \}$ 
			}
			\ElseIf{$R^{xy}$ is feasible and $R^{\neg xy}$ is not feasible}{
				$\val(R) := \val(R) + w(x,y)$
			}
		}	
	}    
	$R^* := \arg\max_{R \in {\cal P}} \{ \val(R) \}$ \\
	return $\s \in R^*$
\caption{\texttt{Regions}}
\label{alg:regions}
\end{algorithm}
 
\begin{example}
We will now examine a simple example of how \texttt{Regions} works on a profile $\Pi$ with alternatives $x_1$, $x_2$, $x_3$, $y_1$, $y_2$, and $y_3$, and $d=2$ (see Figure~\ref{fig:alg}). The set $C$ has three constraints $(x_1,y_1)$, $(x_2,y_2)$, and $(x_3,y_3)$ with corresponding weights $3$, $1$ and $2$. The profile is such that $\boldsymbol{\delta}(x_1,y_1,\Pi)=(-7,2)$, $\boldsymbol{\delta}(x_2,y_2,\Pi)=(4,-2)$, and $\boldsymbol{\delta}(x_3,y_3,\Pi)=(-2,3)$. Therefore, the constraints define the inequalities $-7s_1+2s_2>0$, $4s_1-2s_2>0$, and $-2s_1+3s_2>0$. 

Now, the algorithm proceeds as follows. Initially (see Figure~\ref{fig:alg}a), the algorithm has region $R_0$, defined by the lines $s_2=0$ and $s_1 - s_2 = 0$, in the pool with gain equal to $0$. At the next step (see Figure~\ref{fig:alg}b), the algorithm considers constraint $(x_1,y_1)$ and replaces $R_0$ with regions $R_1=R_0^{x_1y_1}$ and $R_2=R_0^{\neg x_1y_1}$; $R_1$ is the subregion of $R_0$ with $-7s_1+2s_2>0$ that satisfies the first constraint and has gain $3$, while $R_2$ is the subregion of $R_0$ with $-7s_1+2s_2<0$ that does not satisfy the first constraint and has gain $0$ (the line $-7s_1+2s_2=0$ separates the two subregions). Next (see Figure~\ref{fig:alg}c), the constraint $(x_2,y_2)$ leaves both regions $R_1$ and $R_2$ in the pool, and increases both of their gains by $1$. Finally (see Figure~\ref{fig:alg}d), the third constraint $(x_3,y_3)$ replaces region $R_1$ by regions $R_3 = R_1^{x_3y_3}$ and $R_4 = R_1^{\neg x_3y_3}$; $R_3$ is the subregion of $R_1$ with $-2s_1+3s_2>0$ that satisfies the constraint and has gain equal to $6$ (the gain of $R_1$ increased by the weight of the constraint), while $R_4$ is the subregion of $R_1$ with $-2s_1+3s_2<0$ that does not satisfy the constraint and has gain $4$ (equal to that of $R_1$). Observe that in the last step, region $R_2$ also satisfies the third constraint and, hence, its gain is also increased by $2$. The region with the maximum gain is $R_3$ and the algorithm will output some scoring vector from this region. \hfill$\qed$
\end{example}

\begin{figure}[p]
\centering
\includegraphics[scale=0.85]{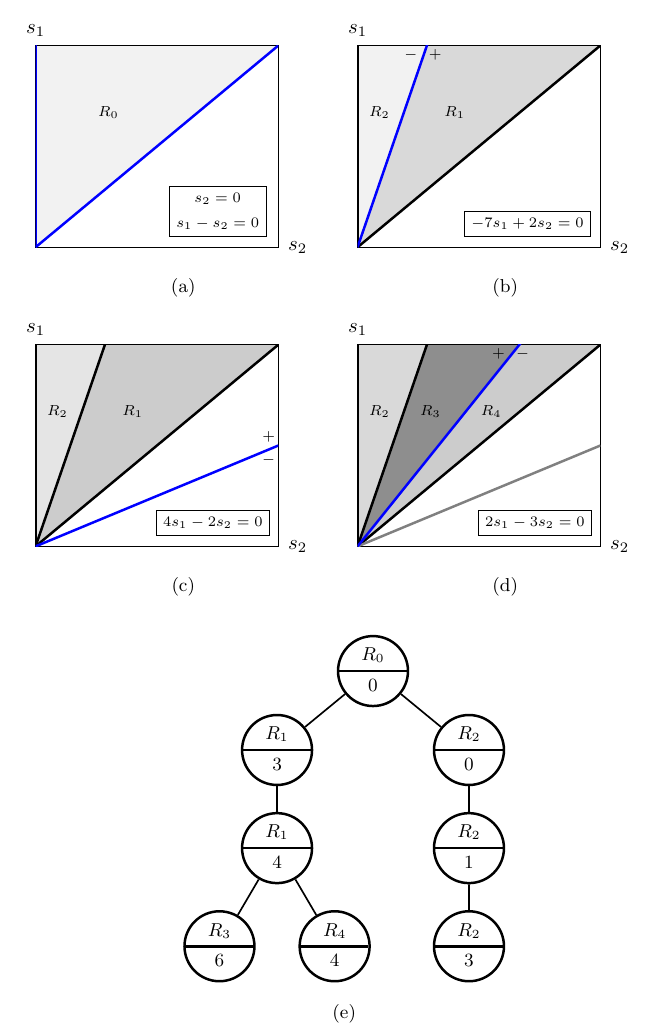}
\caption{An example with the execution of \texttt{Regions} on a profile $\Pi$ with $d=2$. Subfigures (a)--(d) depict the step-by-step consideration of the constraints and how the active regions are updated. At each subfigure, the blue line corresponds to a new constraint that is considered. In Subfigures (b)--(d), the marks $+$ and $-$ denote which of the two subregions defined by the blue line contain the vectors that satisfy the constraint or not. 
The gain of the subregions marked with $+$ are increased by the weight of the constraint (a darker shade of gray for a subregion indicates a higher gain). Subfigure (e) depicts the evolution of the content of the pool of regions together with the corresponding gains. Specifically, the nodes in each level of the tree represent the content of the pool and the corresponding gains during each step of the algorithm.}
\label{fig:alg}
\end{figure}

Next, we prove that our improved {\sf OptPSR} algorithm is correct and that its running time depends exponentially only on the parameter $d$.

\begin{theorem}\label{thm:improved-alg}
Given an instance of {\sf OptPSR} with parameter $d$, a set of constraints $C$, and a profile $\Pi$, algorithm \texttt{Regions} correctly returns a solution in time {\em $\bigO(|C|^d\cdot \poly(|C|,d))$}.
\end{theorem}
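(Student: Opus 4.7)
The plan is to establish (i) the correctness of the returned scoring vector and (ii) the running-time bound, using a classical arrangement-counting result as the main analytical tool.

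For correctness, I would argue by induction on the step counter $t$ that the following invariant holds immediately after the $t$-th constraint has been processed: the active regions in the pool ${\cal P}$ partition $R_0$ into non-empty cells, each cell $R$ corresponds to a unique sign pattern (strictly positive vs.\ non-positive) of the linear forms $\boldsymbol{\delta}(x,y,\Pi)\cdot \s$ over the first $t$ constraints $(x,y)$, and $\val(R)$ equals the total weight of those first $t$ constraints that are satisfied by any $\s\in R$. The base case $t=0$ is trivial, since ${\cal P}=\{R_0\}$ with $\val(R_0)=0$. For the inductive step, when a new constraint $(x,y)$ is processed, each active region $R$ is either split into two non-empty cells $R^{xy}$ and $R^{\neg xy}$ by the hyperplane $\boldsymbol{\delta}(x,y,\Pi)\cdot \s = 0$ (with values updated to $\val(R)+w(x,y)$ and $\val(R)$ respectively), or lies entirely on one side of this hyperplane (in which case $\val(R)$ is bumped by $w(x,y)$ if $R$ satisfies the new constraint, and left unchanged otherwise). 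In every case, the invariant is preserved. After all $|C|$ constraints have been processed, each active region $R$ satisfies $\val(R)=\g(\s,\boldsymbol{\delta}(\Pi),C)$ for every $\s\in R$, so returning any scoring vector from the region of maximum $\val$ yields an optimal {\sf OptPSR} solution.

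For the running time, the key fact is that, after step $t$, the active regions in ${\cal P}$ are precisely the full-dimensional cells of the arrangement induced inside $R_0$ by the first $t$ hyperplanes $\{\s:\boldsymbol{\delta}(x,y,\Pi)\cdot \s = 0\}$. A standard bound (Zaslavsky's formula, or a direct cut-counting argument) gives that any arrangement of $n$ hyperplanes in $\mathbb{R}^d$ has $O(n^d)$ full-dimensional cells, without any general-position assumption. Hence $|{\cal P}|=O(t^d)$ after step $t$. Processing one region against one constraint amounts to checking the feasibility of at most two linear programs with $d$ variables and at most $|C|$ inequalities, which is doable in $\poly(|C|,d)$ time. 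Thus the $(t+1)$-th outer iteration takes $O(t^d)\cdot\poly(|C|,d)$, and summing for $t=0,1,\dots,|C|-1$ gives a total of $O(|C|^{d+1}\cdot\poly(|C|,d)) = \bigO(|C|^d\cdot\poly(|C|,d))$, as claimed.

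The main obstacle is stating the invariant carefully enough for the arrangement bound to apply cleanly: one must argue that the algorithm never creates duplicate cells, and that it enumerates every non-empty cell of the arrangement restricted to $R_0$. This requires a small amount of case analysis at the split step, together with the observation that the ``strict vs.\ non-strict'' asymmetry in the definitions of $R^{xy}$ and $R^{\neg xy}$ does not affect full-dimensional cell counts. Once this bookkeeping is in place, both correctness and the running-time bound follow from the invariant and the classical arrangement count.
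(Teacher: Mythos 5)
Your proposal is correct and follows essentially the same route as the paper: bound the number of distinct non-empty regions by the $O(|C|^d)$ bound on cells/sign patterns of $|C|$ linear forms in $d$ variables (the paper cites Alon and Warren's sign-pattern bound $(8e|C|/d)^d$ where you invoke Zaslavsky, but these are interchangeable here), and charge $\poly(|C|,d)$ of LP work per region per step. The only slight imprecision is your claim that the active regions are \emph{precisely} the full-dimensional cells of the arrangement --- a region cut out by the non-strict sides can be lower-dimensional --- but since the $O(n^d)$ bound applies to all realized sign vectors in $\{+,0,-\}^{|C|}$, not just the top-dimensional cells, the count and hence the argument go through unchanged.
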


\begin{proof}
The correctness of the algorithm should be apparent. It considers the whole space of points in $\mathbb{R}^d$ 
which corresponds to scoring vectors and divides it into all (sub)regions defined for every inclusion-maximal subset of constraints that are satisfied simultaneously. Among all these regions, it finds the one with points that correspond to scoring vectors that satisfy constraints of $C$ with maximum total weight.

Expanding $R_0$ into the regions in the pool when the last constraint of $C$ is considered can be thought of as a non-complete binary tree $T$ with nodes corresponding to regions (see Figure \ref{fig:alg}e for an example). $T$ is rooted at a node corresponding to $R_0$ and is such that each node at level $t-1$, corresponding to a region $R$, has two children at level $t$ if the region $R$ was split in and replaced by two subregions at step $t$ and has one child otherwise (indicating that the region was retained in the pool during step $t$). The total time required to find all regions is proportional to the size of $T$. Since all non-leaf nodes have at least one child, the size of $T$ is at most its height $|C|$ times the number of leaves. The number of leaves is essentially the number of different non-empty regions, which is upper-bounded by the number of different sign patterns that the quantities $\boldsymbol{\delta}(x,y,\Pi) \cdot \s$ define for each constraint $(x,y)$ in $C$. Since these $|C|$ quantities are linear functions over the $d$ coordinates of vector $\s$, a result due to Alon \cite{A96} (see also Warren \cite{W68}) yields that the total number of different sign patterns is at most $\left(\frac{8e|C|}{d}\right)^d$. For each of the nodes of $T$, feasibility can be checked by solving two linear programs with $d$ variables and at most $|C|$ constraints in time $\poly(|C|,d)$. The theorem follows. 
\end{proof}

By Theorem \ref{thm:improved-alg}, we obtain the following corollary. For comparison, the naive algorithm presented at the end of the previous section is polynomial in the very special case where $|C|$ is at most logarithmic in $d$. 
\begin{cor}
Algorithm \texttt{Regions} solves instances of {\sf OptPSR} with constant $d$ in polynomial time.
\end{cor}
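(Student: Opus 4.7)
The plan is to derive the corollary directly from the running-time bound established in Theorem \ref{thm:improved-alg}, namely $\bigO(|C|^d\cdot \poly(|C|,d))$. The argument is essentially a substitution: I fix $d$ to an arbitrary constant and simplify the expression; no new algorithmic ideas are needed.

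First I would observe that with $d$ held constant, the factor $|C|^d$ is a polynomial in $|C|$ of fixed degree $d$, and the factor $\poly(|C|,d)$ — a polynomial in its two arguments — becomes, once its second argument is a constant, a polynomial in $|C|$ alone (its degree and coefficients may depend on $d$, but they are themselves constants once $d$ is fixed). The product of two polynomials in $|C|$ is again polynomial in $|C|$, so the running time collapses to $\poly(|C|)$ for every fixed $d$.

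Second, I would match this bound against the standard notion of polynomial time in the input size. The input to {\sf OptPSR} consists of the vectors $\boldsymbol{\delta}(x,y,\Pi)\in\mathbb{R}^d$ and weights $w(x,y)$ for each of the $|C|$ constraints, so the encoding length is at least $|C|$. Hence a bound of $\poly(|C|)$ is also polynomial in the input length, which is exactly what the corollary claims.

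I do not foresee any real obstacle here; the only point worth flagging is that the hidden constants in the big-$\bigO$ expression do depend on $d$, so the statement is genuinely of the ``slice-wise polynomial'' type (polynomial in $|C|$ for each fixed $d$), rather than jointly polynomial in $(|C|,d)$. This is precisely the dependence captured by the phrase ``for constant $d$'' in the corollary's hypothesis.
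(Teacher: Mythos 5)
Your proposal is correct and matches the paper's own (implicit) derivation exactly: the corollary is stated as an immediate consequence of Theorem~\ref{thm:improved-alg}, obtained by fixing $d$ and observing that $\bigO(|C|^d\cdot \poly(|C|,d))$ collapses to a polynomial in $|C|$. Your additional remarks about the input encoding length and the slice-wise nature of the bound are accurate and only make explicit what the paper leaves unstated.
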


\section{Approximating {\sf OptPSR}}\label{sec:apx-approval}
As the running time of the exact algorithm \texttt{Regions} of the previous section depends exponentially on $d$, our aim here is to design much faster (i.e., polynomial-time) algorithms that compute approximate {\sf OptPSR} solutions. Given an instance of {\sf OptPSR} with parameter $d$, profile $\Pi$, and set $C$ of constraints, let $\s^*$ be the scoring vector that satisfies constraints of $C$ with maximum total weight. A scoring vector $\s$ is a $\rho$-approximate solution, for some $\rho \in [0,1]$, for the particular instance if $\g(\s,\boldsymbol{\delta}(\Pi),C) \geq \rho \cdot \g(\s^*,\boldsymbol{\delta}(\Pi),C)$, i.e., the total weight of constraints satisfied by $\s$ is at least $\rho$ times the total weight of constraints satisfied by $\s^*$. An algorithm is called a $\rho$-approximation algorithm if it computes a $\rho$-approximate solution for every instance of {\sf OptPSR}. We refer to $\rho$ as the {\em approximation ratio} of the algorithm. Ideally, we would like to design approximation algorithms that are as efficient as possible, i.e., algorithms that run in polynomial-time and have as high approximation ratio as possible.

Let us warm up by observing that a single positional scoring rule (e.g., Borda, plurality, $k$-approval) cannot serve as an efficient approximation algorithm as it has an approximation ratio of $0$. This is stated in the next lemma.

\begin{lemma}\label{lem:any-rule-counterexample}
Let $d$ be a positive integer. For every scoring vector $\s\in \mathbb{R}^d_{\geq 0}$ with $s_1 \geq ... \geq s_d$, there exists an instance of {\sf OptPSR} with parameter $d$, profile $\Pi$ and set $C$ of constraints such that $\s$ is $0$-approximate.
\end{lemma}

\begin{proof}
Clearly, the scoring vector $\s=(0, ..., 0)$ does not satisfy any constraint. So, in the following, we assume that $s_1>0$. For any positive integer $K>0$, we will construct a set $C$ of constraints consisting of disjoint pairs of alternatives $(x_t,y_t)$ for $t=1, ..., K$. We will distinguish between two cases depending on the structure of $\s$. For each of them we will construct an {\sf OptPSR} instance (formed by an appropriately defined profile and the set $C$ of constraints), in which $\s$ satisfies no constraint, while another scoring vector $\s^*$ satisfies all of them.

If $s_1=...=s_d>0$, then for every constraint $(x_t,y_t)$ we set $\delta_1(x_t,y_t,\Pi)=1$, $\delta_2(x_t,y_t,\Pi)=-2$ and $\delta_j(x_t,y_t,\Pi)=0$ for $j=3, ..., d$. Profile $\Pi$ can be realized as follows. Alternative $x_t$ appears in position $1$ once and alternative $y_t$ appears in position $2$ twice; all other positions are filled with additional alternatives that do not appear in the constraints of $C$. Observe that $\boldsymbol{\delta}(x_t,y_t,\Pi) \cdot \s =s_1-2s_2<0$ for every $t=1, ..., K$, i.e., $\s$ does not satisfy any constraint. In contrast, the plurality vector $\s^*=(1,0,...,0)$ satisfies all constraints of $C$.

Otherwise, if $s_i>s_{i+1}$ for some $i\in [d-1]$, let $D$ be an integer satisfying $D>\frac{s_{i+1}}{s_i-s_{i+1}}$. For every constraint $(x_t,y_t)$ for $t=1, ..., K$, we set $\delta_i(x_t,y_t,\Pi)=-D$, $\delta_{i+1}(x_t,y_t,\Pi)=D+1$, and $\delta_j(x_t,y_t,\Pi)=0$ for $j\in [d]\setminus \{i,i+1\}$. Profile $\Pi$ can be realized as follows. Alternative $x_t$ appears $D$ times in position $i$ and alternative $y_t$ appears $D+1$ times in position $i+1$; again, all other positions are filled with additional alternatives that do not appear in the constraints of $C$. Observe that the definition of $D$ implies that $\boldsymbol{\delta}(x_t,y_t,\Pi) \cdot \s =-s_iD+s_{i+1}(D+1)<0$, i.e., $\s$ does not satisfy any constraint. In contrast, the $(i+1)$-approval vector $\s^*$ (consisting of $1$s in the first $i+1$ positions and $0$s in the remaining ones) satisfies all constraints of $C$.

In both cases, the scoring vector $\s$ is a $0$-approximate solution.
\end{proof}

We conclude that efficient approximation algorithms should consider many candidate scoring vectors and pick the one that better serves a given {\sf OptPSR} instance. This is a recipe that is followed by the algorithms \texttt{BestApproval} and \texttt{ApxPSR} that are presented in Sections~\ref{subsec:bestapproval} and \ref{subsec:apxpsr}, respectively.

\subsection{Approximating {\sf OptPSR} using approval scoring vectors}\label{subsec:bestapproval}
We will now show that an extremely simple algorithm that examines a set of simple scoring vectors and returns the best of them achieves a $1/d$-approximation solution.  Formally, for $t\in [d]$, the $t$-approval rule is a positional scoring rule that uses the scoring vector that has $1$ in the first $t$ positions and $0$ in the remaining ones. Our algorithm, which we call \texttt{BestApproval}, considers all $t$-approval rules and returns the one that satisfies constraints of maximum total weight. A description of \texttt{BestApproval} in pseudocode is given as Algorithm~\ref{alg:bestapproval}.

\begin{algorithm}[h!]
    \SetAlgoLined
    \KwIn{parameter $d$, profile $\boldsymbol{\delta}(\Pi)$, set $C$ of constraints}
    \KwOut{a $t^*$-approval rule}
	\For{$t \in [d]$}{
		$\tbold := (\underbrace{1, ..., 1,}_{t \text{ times}} 0, ..., 0)$ \\
		$\g(\tbold,\boldsymbol{\delta}(\Pi),C) := \sum_{(x,y)\in C}w(x,y) \cdot \one{\boldsymbol{\delta}(x,y,\Pi) \cdot \tbold >0}$
	 }   
	 return $t^* \in \arg\max_{t} \{ \g(\tbold,\boldsymbol{\delta}(\Pi),C) \}$
\caption{\texttt{BestApproval}}
\label{alg:bestapproval}
\end{algorithm}

With the following two theorems, we prove that \texttt{BestApproval} is a $1/d$-approximation algorithm for {\sf OptPSR} (Theorem~\ref{thm:approval-upper}) and, furthermore, we show that this bound is tight by providing a particular instance for which any $t$-approval rule is an (at most) $1/d$-approximate solution (Theorem~\ref{thm:approval-lower}).

\begin{theorem}\label{thm:approval-upper}
Given an instance of {\sf OptPSR} with parameter $d$, a set of constraints $C$, and a profile $\Pi$, algorithm \texttt{BestApproval} returns a $1/d$-approximate solution. 
\end{theorem}

\begin{proof}
In order to prove the bound on the approximation ratio, we will show that there exists some $t \in [d]$ so that the corresponding $t$-approval scoring rule is a $1/d$-approximate solution. Then, the $t^*$-approval rule returned by \texttt{BestApproval} will have an at least as good approximation guarantee. 

Let $\s^* \in \mathbb{R}_{\geq 0}^d$ be the optimal scoring rule for the given instance of {\sf OptPSR}, and let $X \subseteq C$ be the set of constraints that it satisfies, i.e., $\boldsymbol{\delta}(x,y,\Pi)\cdot \s^* >0$ if and only if $(x,y)\in X$. For $k\in [d]$, let \begin{align*}
X_{k} &=\left\{(x,y)\in X: \sum_{j=1}^k{\delta_j(x,y,\Pi)} > 0\right\}.
\end{align*}
I.e., $X_k$ contains the constraints of $X$ which are satisfied by the $k$-approval scoring rule. This definition implies that
\begin{align}\label{eq:app}
\sum_{k=1}^d{\g(\kbold,\boldsymbol{\delta}(\Pi),C)} &\geq \sum_{k=1}^d{\sum_{(x,y)\in X_k}{w(x,y)}}.
\end{align}

We now claim that each constraint $(x,y)\in X$ belongs to $X_k$ for some $k\in [d]$ and, hence,
\begin{align}\label{eq:app-2}
\sum_{(x,y)\in X}{w(x,y)} & \leq \sum_{k=1}^d{\sum_{(x,y)\in X_k}{w(x,y)}}.
\end{align}
Assume otherwise; then, it would mean that $\sum_{j=1}^k{\delta_j(x,y,\Pi)}\leq 0$ for every $k\in [d]$. By multiplying these $d$ inequalities with the non-negative quantities $s^*_k-s^*_{k+1}$ for $k=1, ..., d-1$ and $s^*_d$, and summing them, we obtain that
\begin{align*}
\sum_{k=1}^{d-1}{(s^*_k-s^*_{k+1})\sum_{j=1}^k{\delta_j(x,y,\Pi)}}+s^*_d\sum_{j=1}^d{\delta_j(x,y,\Pi)} & \leq 0.
\end{align*}
The claim follows by observing that the left-hand side is equal to $\boldsymbol{\delta}(x,y,\Pi)\cdot \s^*$ and, hence, $\boldsymbol{\delta}(x,y,\Pi)\cdot \s^* \leq 0$, contradicting the fact that $(x,y)\in X$.

Using (\ref{eq:app-2}) and (\ref{eq:app}), we obtain
\begin{align*}
\g(\s^*,\boldsymbol{\delta}(\Pi),C) &= \sum_{(x,y)\in X}{w(x,y)}\leq \sum_{k=1}^d{\sum_{(x,y)\in X_k}{w(x,y)}} \leq \sum_{k=1}^d{\g(\kbold,\boldsymbol{\delta}(\Pi),C)}.
\end{align*} 
This implies that there exists $t\in [d]$ such that $\g(\tbold,\boldsymbol{\delta}(\Pi),C)\geq \frac{1}{d}\g(\s^*,\boldsymbol{\delta}(\Pi),C)$, as desired.
The theorem follows.
\end{proof}

\begin{theorem}\label{thm:approval-lower}
There exists an instance of {\sf OptPSR} with parameter $d$ for which all $t$-approval rules, with $t \in [d]$, are (at most) $1/d$-approximate.
\end{theorem}

\begin{proof}
We will define an {\sf OptPSR} instance with $d$ pairs of alternatives $(x_t, y_t)$ as constraints with $w(x_t,y_t)=1$ for $t\in [d]$.  We will build a profile $\Pi$ so that the $t$-approval scoring rule satisfies only constraint $(x_t,y_t)$ for $t \in [d]$, while there exists a scoring rule that simultaneously satisfies all constraints. 
The profile is defined as follows:
\begin{itemize}
\item Alternative $x_1$ appears $2d-1$ times in position $1$, and alternative $y_1$ appears $2d$ times in position $2$. This means that $\delta_1(x_1,y_1,\Pi)=2d-1$, $\delta_2(x_1,y_1,\Pi)=-2d$ and $\delta_j(x_1,y_1,\Pi)=0$ for $j\geq 3$.

\item For $2 \leq t \leq d-1$, alternative $x_t$ appears $2d$ times in position $t$, and alternative $y_t$ appears once in position $1$ and $2d$ times in position $t+1$. This means that $\delta_1(x_t,y_t,\Pi)=-1$, $\delta_t(x_t,y_t,\Pi)=2d$, $\delta_{t+1}(x_t,y_t,\Pi)=-2d$, and $\delta_j(x_t,y_t,\Pi)=0$ for $j\not\in\{1,t,t+1\}$.

\item Alternative $x_d$ appears $2d$ times in position $d$, and alternative $y_d$ appears once in position $1$. This means that $\delta_1(x_d,y_d,\Pi) = -1$, $\delta_d(x_d,y_d,\Pi)=2d$, and $\delta_j(x_d,y_d,\Pi)=0$ for $2 \leq j \leq d-1$.

\item The rest of the positions in the votes are filled with additional alternatives that do not appear in the constraints.
\end{itemize}

Observe that, for every $t\in [d]$, it holds that $\sum_{j=1}^{t}\delta_{j}(x_t,y_t,\Pi) = 2d-1 > 0$ and $\sum_{j=1}^{t}\delta_{j}(x_\ell,y_\ell,\Pi)=-1$, for any $\ell \neq [d]\setminus\{t\}$. Hence, the $t$-approval scoring rule satisfies only constraint $(x_t,y_t)$ for a total weight of $1$. Now, consider a variation of the Borda count scoring rule that uses the scoring vector $\s = (d, d-1, ..., 1)$. Then, since $\boldsymbol{\delta}(x_\ell,y_\ell,\Pi) \cdot \s = d > 0$ for every $\ell \in [d]$, this scoring rule satisfies all constraints for a total weight of $d$. Therefore, we conclude that any $t$-approval has approximation ratio at most $1/d$.
\end{proof}

\subsection{An improved approximation algorithm}\label{subsec:apxpsr}
Here, we will design the more sophisticated approximation algorithm \texttt{ApxPSR}$_k$, which is parameterized by a positive integer $k$, and exploits ideas that we have already presented above. Similarly to \texttt{BestApproval}, \texttt{ApxPSR}$_k$ searches over a set of candidate scoring vectors and identifies the one that better serves the available input data (profile and constraints). Two important differences between \texttt{ApxPSR}$_k$ and \texttt{BestApproval} are that (a) the set of candidate scoring rules is much broader now and (b) a few executions of a variation of the exact algorithm \texttt{Regions} are used in order to find the best among these candidates.

Let $\ell\in [\lceil d/k\rceil]$. We say that a scoring vector $\s$ follows the $\ell$-th $k$-pattern if $s_1 = ... = s_{k(\ell-1)+1} \geq s_{k(\ell-1)+2} \geq ... \geq s_{\min\{d,k\ell\}}\geq 0$ and, if $\ell<\lceil d/k \rceil$, $s_{k\ell+1} = ... = s_{d}=0$. For example, the $t$-approval scoring rule follows that $t$-th $1$-pattern. Note that the scoring vectors that follow some of the $\lceil d/k\rceil$ $k$-patterns have a very special structure and (at most) $k$ different values in their score entries. 

For a given instance of {\sf OptPSR} and integers $k>0$ and $\ell\in [\lceil d/k\rceil]$, we can compute the best scoring vector that follows the $\ell$-th $k$-pattern via a slight modification of \texttt{Regions}. We refer to this modification as algorithm \texttt{mRegions} and we assume that, together with parameter $d$, the profile $\boldsymbol{\delta}(\Pi)$, and the set $C$ of constraints, it receives as input the parameters $k$ and $\ell$ as well. All we need to do is to include the equality constraints which restrict the entries of scoring vectors that follow the $\ell$-th $k$-pattern in the initial region $R_0$. These equality constraints are included in all (sub)regions that are considered by the algorithm. Hence, the regions that will be contained in the pool ${\cal P}$ when \texttt{mRegions} terminates will all satisfy the equality restrictions. In this way, the scoring vector that will be computed will follow the $\ell$-th $k$-pattern, as desired.

Our algorithm \texttt{ApxPSR}$_k$ first calls \texttt{mRegions} to compute the $\lceil d/k \rceil$ best scoring rules that follow the $\ell$-th $k$-pattern for $\ell=1, ..., \lceil d/k \rceil$ and returns the best among all these rules, i.e., the one that yields the highest gain among them. Algorithm \texttt{ApxPSR}$_k$ follows as Algorithm~\ref{alg:apxpsr}.

\begin{algorithm}[h!]
    \SetAlgoLined
    \KwIn{parameter $d$, profile $\boldsymbol{\delta}(\Pi)$, set $C$ of constraints}
    \KwOut{a scoring vector $\s = (s_1, ..., s_d)$}
	$S:=\emptyset$\\
	\For{$\ell \in [\lceil d/k \rceil]$}{
		$S:=S\cup \mbox{\texttt{mRegions}}(d,\boldsymbol{\delta}(\Pi),C,k,\ell)$
	 }   
	 return $\s \in \arg\max_{\s'\in S}\{\g(\s',\boldsymbol{\delta}(\Pi),C) \}$
\caption{\texttt{ApxPSR}$_k$}
\label{alg:apxpsr}
\end{algorithm}

The next theorem summarizes the properties of algorithm \texttt{ApxPSR}$_k$. Observe that the algorithm runs in polynomial time when the parameter $k$ is a constant. The approximation ratio is better than approximately $k$ times that of \texttt{BestApproval}.

\begin{theorem}\label{thm:apxpsr}
Given an instance of {\sf OptPSR} with parameter $d$, a set of constraints $C$, and a profile $\Pi$, algorithm \texttt{ApxPSR}$_k$ runs in time {\em $\bigO(|C|^k \cdot \poly(|C|,d))$} and returns a $\lceil d/k\rceil^{-1}$-approximate solution. 
\end{theorem}

\begin{proof}
We first show the bound on the running time. Observe that \texttt{ApxPSR}$_k$ selects the best scoring vector among those returned in $\lceil d/k\rceil$ executions of \texttt{mRegions}. We will show that each execution of \texttt{mRegions} takes time $\bigO(|C|^k\cdot \poly(|C|,d))$.

By mimicking the proof of Theorem~\ref{thm:improved-alg}, we can view the expansion of the pool by \texttt{mRegions} as a non-complete binary tree $T$ with nodes corresponding to regions. Then, the total time required to find all regions by \texttt{mRegions} is proportional to the size of $T$, which is at most its height $|C|$ times the number of leaves. The number of leaves is again the number of different non-empty regions, which is upper-bounded by the number of different sign patterns that the quantities $\boldsymbol{\delta}(x,y,\Pi)\cdot \s$ define for each constraint $(x,y)$ in $C$. Since these $|C|$ quantities are linear functions over $k$ (as opposed to $d$ in the proof of Theorem~\ref{thm:improved-alg}) coordinates of vector $\s$, the results of Alon \cite{A96} and Warren \cite{W68} yield that the total number of different sign patterns is at most $\left(\frac{8e|C|}{k}\right)^k$. For each of the nodes of $T$, feasibility can again be checked by solving two linear programs with $d$ variables and at most $|C|$ constraints in time $\poly(|C|,d)$. The bound on the running time follows. 

In order to prove the bound on the approximation ratio, we will show that there exists a scoring vector $\s^{(k)}$ that follows some $k$-pattern which is a $\left\lceil\frac{d}{k}\right\rceil^{-1}$-approximate solution to the {\sf OptPSR} instance. Then, the scoring vector $\s$ returned by algorithm \texttt{ApxPSR}$_k$ will have the same approximation guarantee, since \texttt{ApxPSR}$_k$ returns the best scoring vector following some $k$-pattern, i.e., $\g(\s,\boldsymbol{\delta}(\Pi),C) \geq \g(\s^{(k)},\boldsymbol{\delta}(\Pi),C)$.

Consider an instance of {\sf OptPSR} consisting of a profile $\Pi$ and a set of constraints $C$ with weighting $w:C\rightarrow \mathbb{R}_{\geq 0}$. Let $\s^* \in \mathbb{R}_{\geq 0}^d$ be the optimal scoring vector for this instance. We will show that there exists a scoring vector $\s^{(k)}$ that follows some $k$-pattern such that $\g(\s^{(k)},\boldsymbol{\delta}(\Pi),C) \geq \left\lceil\frac{d}{k}\right\rceil^{-1} \g(\s^*,\boldsymbol{\delta}(\Pi),C)$.

Let $X\subseteq C$ be the set of constraints that are satisfied by the scoring vector $\s^*$. We now define an alternative view of $\s^*$ by setting $\alpha_d=s_d$ and $\alpha_i=s^*_i-s^*_{i+1}$ for $i=1, ..., d-1$. I.e., instead of keeping the entries of the scoring vector $\s^*$, we use the vector $\boldsymbol{\alpha}=(\alpha_1, ..., \alpha_d)$ to represent the entry $s^*_d$ and the increase of $s^*_i$ compared to $s^*_{i+1}$ for $i=1, ..., d-1$. Hence, $s^*_j=\sum_{i=j}^d{\alpha_i}$ for $j=1, ..., d$. Using the definition of $\boldsymbol{\delta}(x,y,\Pi)$, for every $(x,y)\in X$, we have 
\begin{align*}
\boldsymbol{\delta}(x,y,\Pi) \cdot \s^* = \sum_{j=1}^d{\delta_j(x,y,\Pi)\cdot s^*_j}
= \sum_{j=1}^d{\delta_j(x,y,\Pi)\sum_{i=j}^d{\alpha_i}}
= \sum_{i=1}^d{\alpha_i\sum_{j=1}^{i}{\delta_j(x,y,\Pi)}}.
\end{align*}
Now, define 
\begin{align*}
\xi_\ell(x,y,\Pi)&= \sum_{i=(\ell-1)k+1}^{\min\{d,\ell k\}}{\alpha_i\sum_{j=1}^{i}{\delta_j(x,y,\Pi)}}
\end{align*}
for $\ell\in [\lceil d/k\rceil]$, and observe that the last two equalities imply that  
\begin{align}\label{eq:sum}
\boldsymbol{\delta}(x,y,\Pi)\cdot \s^* &=\sum_{\ell=1}^{\lceil d/k\rceil}{\xi_{\ell}(x,y,\Pi)}.
\end{align} 
For $\ell\in [\lceil d/k\rceil]$, define the set
\begin{align*}
X_\ell = \{ (x,y)\in X: \xi_\ell(x,y,\Pi)>0 \}
\end{align*} 
and observe that, for every constraint $(x,y)\in X$, (\ref{eq:sum}) and the fact that $\boldsymbol{\delta}(x,y,\Pi)\cdot \s^*>0$ imply that $\xi_{\ell}(x,y,\Pi)>0$ for some $\ell\in [\lceil d/k\rceil]$. This yields that 
\begin{align*}
\g(\s^*,\boldsymbol{\delta}(\Pi),X) &= \sum_{(x,y)\in X}{w(x,y)} \leq \sum_{\ell=1}^{\lceil d/k\rceil}{\sum_{(x,y)\in X_{\ell}}{w(x,y)}}.
\end{align*} 
We conclude that there exists $\ell^*\in [\lceil d/k\rceil]$ such that 
\begin{align}\label{eq:apx}
\sum_{(x,y)\in X_{\ell^*}}{w(x,y)} &\geq \lceil d/k\rceil^{-1} \g(\s^*,\boldsymbol{\delta}(\Pi),X)
= \lceil d/k\rceil^{-1} \g(\s^*,\boldsymbol{\delta}(\Pi),C).
\end{align}

In order to complete the proof, we will define a particular scoring vector $\s^{(k)}$ that follows a $k$-pattern and satisfies all constraints in $X_{\ell^*}$. We do so by first defining the difference vector $\boldsymbol{\alpha}^{(k)}=(\alpha^{(k)}_1, ..., \alpha^{(k)}_d)$ corresponding to $\s^{(k)}$. This is done as follows:
\begin{itemize}
\item If $\ell^*>1$, we set $\alpha^{(k)}_i=0$ for $i=1, ..., k(\ell^*-1)$.
\item If $\ell^*<\lceil d/k \rceil$, we set $\alpha^{(k)}_i=0$ for $i=k\ell^*+1, ..., d$.
\item We set $\alpha^{(k)}_i=\alpha_i$ for $i=k(\ell^*-1)+1, ..., \min\{d,k\ell^*\}$.
\end{itemize}
Now, define the scoring vector $\s^{(k)}$ as $s^{(k)}_j=\sum_{i=j}^d{\alpha^{(k)}_i}$ for $j=1, ..., d$. Then, for every $(x,y) \in X$,
\begin{align*}
\boldsymbol{\delta}(x,y,\Pi)\cdot \s^{(k)} &= \sum_{j=1}^d{\delta_j(x,y,\Pi)\cdot s^{(k)}_j} = \sum_{j=1}^d{\delta_j(x,y,\Pi)\sum_{i=j}^d{\alpha^{(k)}_i}}\\
&= \sum_{i=1}^d{\alpha^{(k)}_i\sum_{j=1}^i{\delta_j(x,y,\Pi)}} = \sum_{i=k(\ell^*-1)+1}^{\min\{d,k\ell^*\}}{\alpha^{(k)}_i\sum_{j=1}^i{\delta_j(x,y,\Pi)}}\\
&= \xi_{\ell^*}(x,y,\Pi).
\end{align*}
This equality, together with the definition of set $X_{\ell}$ yields that $\boldsymbol{\delta}(x,y,\Pi)\cdot \s^{(k)}>0$ for every $(x,y)\in X_{\ell^*}$, i.e., the scoring vector $\s^{(k)}$ satisfies all constraints in $X_{\ell^*}$. We obtain that
\begin{align*}
\g(\s^{(k)},\boldsymbol{\delta}(\Pi),C) &\geq \g(\s^{(k)},\boldsymbol{\delta}(\Pi),X_{\ell^*})=\sum_{(x,y)\in X_{\ell^*}}{w(x,y)} \geq \lceil d/k\rceil^{-1}\g(\s^*,\boldsymbol{\delta}(\Pi),C).
\end{align*}
The last inequality follows by (\ref{eq:apx}). This completes the proof of the approximation bound.
\end{proof}

\section{Hardness of approximation}\label{sec:hard}
We devote this section to proving that {\sf OptPSR} is not simply computationally hard, but also hard to approximate well. The proof of our next statement relies on an approximation-preserving reduction from a well-known inapproximable optimization problem.

\begin{theorem}\label{thm:apx-hard}
For every constant $\eta \in (0,1/24]$, {\sf OptPSR} is NP-hard to approximate within $23/24+\eta$.
\end{theorem}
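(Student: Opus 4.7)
The plan is to establish the inapproximability by an approximation-preserving reduction from MAX-3LIN-2. By H{\aa}stad's theorem \cite{H01}, for every $\epsilon > 0$ it is NP-hard to distinguish instances of MAX-3LIN-2 in which at least a $(1-\epsilon)$-fraction of the $m$ equations is simultaneously satisfiable from instances in which no Boolean assignment satisfies more than a $(1/2+\epsilon)$-fraction. The target is to translate this $(1-\epsilon)$ vs. $(1/2+\epsilon)$ gap into a gap for {\sf OptPSR} whose ratio tends to $23/24$.

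Given a MAX-3LIN-2 instance with $n$ variables and $m$ equations, the reduction will set $d = \Theta(n)$ and dedicate to each Boolean variable $x_i$ a pair of consecutive positions $(p_i, p_i+1)$ in the scoring vector, with the intended semantics that $x_i = 1$ corresponds to $s_{p_i} > s_{p_i+1}$ and $x_i = 0$ to $s_{p_i} = s_{p_i+1}$. For each equation, I would introduce a gadget of $12$ unit-weight constraints whose $\boldsymbol{\delta}$-vectors are supported on the six positions corresponding to the three variables of that equation. The gadget is designed so that, under the above encoding, any Boolean assignment satisfying the equation extends to a scoring vector satisfying all $12$ of its gadget constraints, while any Boolean assignment violating the equation satisfies at most $11$ of them.

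Granting the gadget properties and that any optimal scoring vector can be rounded to a Boolean-encoded one without losing gain, the optimal {\sf OptPSR} value equals $11m + k$, where $k$ is the maximum number of MAX-3LIN-2 equations that can be satisfied by a single Boolean assignment. Applying H{\aa}stad's gap, the YES case gives {\sf OptPSR} optimum at least $(12 - \epsilon)m$ while the NO case gives optimum at most $(11.5 + \epsilon)m$. Since $\frac{11.5 + \epsilon}{12 - \epsilon} \to \frac{23}{24}$ as $\epsilon \to 0$, no polynomial-time algorithm can approximate {\sf OptPSR} within $23/24 + \eta$ for any constant $\eta > 0$ unless $\mathsf{P} = \mathsf{NP}$. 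The count of $12$ constraints per equation is dictated exactly by the arithmetic: a smaller gadget of size $c$ would yield the weaker ratio $1 - \frac{1}{2c}$, and $c = 12$ is the smallest integer producing $23/24$.

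The main obstacle is the gadget construction itself, where two subtleties must be handled. First, each vector $\boldsymbol{\delta}(x,y,\Pi)$ must be realizable as the coordinate-wise difference of positional histograms of two alternatives in a genuine ranking profile rather than an arbitrary integer vector; this forces the gadget to be witnessed by concrete sets of rankings, using dummy alternatives to pad each ranking to length $d$. Second, one must argue that ``fractional'' scoring vectors that do not correspond to any Boolean assignment cannot cheat by satisfying more than $11$ gadget constraints on a violated equation; this is typically handled by a case analysis on the signs of the differences $s_{p_i} - s_{p_i+1}$, combined with a rounding argument that transforms any scoring vector into a Boolean-encoded one without decreasing the total gain. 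Everything else (summing gains across equations, ensuring disjointness of blocks, verifying the scoring-vector inequalities $s_j \geq s_{j+1}$ and $s_d \geq 0$) is routine bookkeeping.
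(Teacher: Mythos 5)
Your high-level skeleton matches the paper's: a reduction from MAX-3LIN-2, a target optimum of the form $11m+L$, and the gap computation $(23/2+\eta')m$ versus $(12-\eta')m$ giving $23/24$. But the entire content of the theorem lies in the gadget, and you have not constructed it; you have only specified the properties it would need to have. Worse, the structure you propose --- twelve unit-weight constraints per equation, supported only on that equation's six positions, with the property ``all $12$ satisfiable iff the equation is satisfied, at most $11$ otherwise'' --- is not what the paper builds, and there is no evidence such a self-contained per-equation gadget exists. The paper reaches the ``$12$ versus $11$'' count only in an \emph{amortized} sense: it uses four constraints per \emph{variable} $x_i$, each of weight $m_i$ (the number of equations containing $x_i$), encoding $a_i\in(0,\epsilon)\cup(1,1+\epsilon)$ via the inequalities $a_i>0$, $a_i<\epsilon$, $a_i>1$, $a_i<1+\epsilon$ (three of which hold iff $a_i$ is in the union of intervals, two otherwise), plus four \emph{unit-weight} constraints per equation on the sum $a_i+a_j+a_k$. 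The $9m=\sum_i 3m_i$ from variable constraints plus $2$ or $3$ per equation yields $11m+L$. Your formula $1-\tfrac{1}{2c}$ is correct arithmetic \emph{assuming} a size-$c$ gadget with gap $c$ vs.\ $c-1$ exists, but that assumption is precisely what must be proved.

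Two further points where your plan would fail as stated. First, your encoding $x_i=0\leftrightarrow s_{p_i}=s_{p_i+1}$ cannot be expressed by {\sf OptPSR} constraints, which are strict inequalities $\boldsymbol{\delta}\cdot\s>0$; moreover it gives no robustness margin, since a scoring vector with $s_{p_i}-s_{p_i+1}$ tiny but positive is indistinguishable in your semantics from $x_i=1$. The paper's interval encoding ($a_i\in(0,\epsilon)$ for $0$, $a_i\in(1,1+\epsilon)$ for $1$) exists exactly to create a quantitative separation. Second, and most importantly, the soundness direction --- rounding an arbitrary scoring vector to a Boolean-encoded one without losing gain --- is the crux, and you defer it to ``a case analysis combined with a rounding argument.'' In the paper this step works only because of the weights $m_i$ on the variable constraints: moving a bad $a_i$ into $(0,\epsilon)$ gains weight $m_i$ from the variable constraints, which exactly compensates the at most one unit that may be lost on each of the $m_i$ equations containing $x_i$. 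Your proposal has no variable constraints at all, so there is no mechanism to pay for the rounding. As written, the proposal is a plausible plan whose two essential steps (gadget construction and rounding) are missing, and the specific gadget shape you commit to appears not to be realizable.
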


\begin{proof}
We use a reduction from MAX-3LIN-2, the problem of 
maximizing the number of satisfied equations in an over-determined system of linear equations modulo $2$. An instance of MAX-3LIN-2 consists of $n$ binary variables $x_i \in \{0,1\}$ and $m$ equations of the forms $x_i \oplus x_j \oplus x_k = 0$ and $x_i \oplus x_j \oplus x_k = 1$, where $\oplus$ denotes addition modulo $2$. The objective is to find an assignment to the variables so that the number of satisfied equations is maximized. Below, we use the term $\alpha$-equation to refer to an equation of the form $x_i \oplus x_j \oplus x_k = \alpha$ (for $\alpha\in\{0,1\}$).

Given an instance of MAX-3LIN-2, our reduction constructs in polynomial-time an instance of {\sf OptPSR} that has a scoring vector that satisfies constraints of total weight $11m+L$ if and only if the MAX-3LIN-2 instance has an assignment satisfying $L$ equations. A famous result by H{\aa}stad~\cite{H01} states that, for any constant $\eta' \in (0,1/2)$, it is hard to distinguish in time polynomial in $n$ and $m$ whether a given instance of MAX-3LIN-2 has an assignment that satisfies at least $(1-\eta')m$ equations or any assignment satisfies at most $(1/2+\eta')m$ equations. As a consequence of our reduction, we obtain that it is hard to distinguish between instances of {\sf OptPSR} that have a scoring vector that satisfies constraints of total weight at least $(12-\eta')m$ and instances of {\sf OptPSR} in which the total weight of the constraints satisfied by any scoring vector is at most $(23/2+\eta')m$. An inapproximability bound of $23/24+\eta$ (for every constant $\eta\in (0,1/24]$) then follows by standard arguments. The rest of the proof consists of two main parts: the description of the reduction and the proof of correctness. 

\paragraph{Description of the reduction} Without loss of generality, we can assume that the scoring vectors $\s = (s_1,s_2,...,s_d)$, that we seek for, have $s_1=d$ and the remaining scores are defined in terms of $d-1$ variables $a_1, a_2, ..., a_{d-1} \geq 0$ as $s_{i+1} = s_i - a_i$ (or, consequently, $s_j=d-\sum_{k=1}^{j-1}{a_k}$) for $i=1, ..., d-1$ so that $\sum_{i=1}^{d-1}{a_i}\leq d$. Hence, a constraint $(y,z)$ requiring that the score of $y$ is higher than the score of $z$ can be expressed as a linear inequality of the variables $a_j$ with $j \in [d-1]$. The assumption that $s_1=d$ allows for inequalities that have non-zero constant terms. 

Our reduction will be described in two steps. Given the MAX-3LIN-2 instance (i.e., the binary variables and the equations), we will first define linear inequalities among the variables $a_i$. Later, we will define the {\sf OptPSR} instance by explicitly constructing the profile and specifying the constraints as pairs of alternatives and corresponding weights that are consistent to these linear inequalities. 

For the first step of the description of our reduction, we present the linear inequalities that will later evolve to constraints of the {\sf OptPSR} instance. We set $\epsilon$ to be a small constant such that $0<\epsilon\leq 1/d$ and $1/\epsilon$ is an integer. We define the following inequalities:
\begin{itemize}
\item For every variable $x_i$, we have the four inequalities $a_i > 0$, $a_i < \epsilon$, $a_i > 1$ and $a_i < 1+\epsilon$. An important property of the four inequalities corresponding to variable $x_i$ is that three of them are simultaneously satisfied when $a_i \in  (0,\epsilon) \cup (1,1+\epsilon)$, and only two of them are satisfied for any other value of $a_i$. Intuitively, by enforcing these inequalities to be satisfied as constraints with high weight will simulate binary assignments in the MAX-3LIN-2 instance.

\item For every equation, there are four inequalities.
\begin{itemize}
\item If the equation is of the form $x_i \oplus x_j \oplus x_k = 0$, the inequalities are $a_i + a_j + a_k > 0$, $a_i + a_j + a_k < \epsilon$, $a_i + a_j + a_k >2$ and $a_i + a_j + a_k < 2+\epsilon$. Now, three of the inequalities corresponding to a $0$-equation $x_i \oplus x_j \oplus x_k = 0$ are simultaneously satisfied when $a_i + a_j + a_k \in (0,\epsilon) \cup (2, 2+\epsilon)$; otherwise, exactly two inequalities are satisfied. 

\item If the equation is of the form $x_i \oplus x_j \oplus x_k = 1$, the inequalities are $a_i + a_j + a_k > 1$, $a_i + a_j + a_k < 1+\epsilon$, $a_i + a_j + a_k >3$ and $a_i + a_j + a_k < 3+\epsilon$. Again, for every $1$-equation $x_i \oplus x_j \oplus x_k = 1$, we have three inequalities that are simultaneously satisfied when $a_i + a_j + a_k \in (1,1+\epsilon) \cup (3, 3+\epsilon)$; otherwise, exactly two inequalities are satisfied. 
\end{itemize}
These inequalities will correspond to constraints of equal (and low) weight of the {\sf OptPSR} instance. Overall, our construction will guarantee that the total weight of all satisfied constraints will be linear to the number of satisfied equations in the MAX-3LIN-2 instance. This will be crucial in proving the correctness (and the approximation-preserving nature) of the reduction.
\end{itemize} 

We are now ready to proceed with the second step of the description of the reduction. In particular, we show how the above inequalities are implemented by explicitly constructing a profile and pairs of alternatives as constraints. Let $d=n+1$. Also, for $i\in [d-1]$, let $m_i$ be the number of equations in which variable $x_i$ participates. For every variable $x_i$, with $i\in [n]$, we have four constraints $(y_i^t,z_i^t)$, where $t \in \{1,2,3,4\}$, of weight $m_i$ each. In the profile, alternatives $y_i^t$ and $z_i^t$ appear in specific positions as follows:
\begin{itemize}
\item Alternative $y_i^1$ appears once in position $i$, and alternative $z_i^1$ appears once in position $i+1$.  Then, the constraint $(y_i^1, z_i^1)$ corresponds to the inequality $s_i-s_{i+1}>0$ or, equivalently, $a_i > 0$.
\item Alternative $y_i^2$ appears once in position $1$ and $d/\epsilon$ times in position $i+1$, and alternative $z_i^2$ appears $d/\epsilon$ times in position $i$. The constraint $(y_i^2, z_i^2)$ corresponds to the inequality $s_1 - \frac{d}{\epsilon}s_i + \frac{d}{\epsilon}s_{i+1} > 0$ or, equivalently, $a_i < \epsilon$ (since $s_1=d$ and $a_i = s_i - s_{i+1}$).
\item Alternative $y_i^3$ appears $d$ times in position $i$, and alternative $z_i^3$ appears once in position $1$ and $d$ times in position $i+1$. The constraint $(y_i^3, z_i^3)$ corresponds to the inequality $-s_1 + ds_i - ds_{i+1} > 0$ or, equivalently, $a_i > 1$.
\item Alternative $y_i^4$ appears $1/\epsilon + 1$ times in position $1$ and $d/\epsilon$ times in position $i+1$, and alternative $z_i^4$ appears $d/\epsilon$ times in position $i$. The constraint $(y_i^4, z_i^4)$ corresponds to the inequality $\left(\frac{1}{\epsilon}+1\right)s_1 -\frac{d}{\epsilon}s_i + \frac{d}{\epsilon}s_{i+1} > 0$ or, equivalently, $a_i < 1+\epsilon$.
\end{itemize}

For every equation $\ell \in [m]$, we have four constraints $(b_\ell^t,c_\ell^t)$, where $t \in \{1,2,3,4\}$, of unit weight each. In the profile, these alternatives appear in specific positions depending on whether equation $\ell$ is a $0$- or a $1$-equation. In the case where it is a $0$-equation of the form $x_i \oplus x_j \oplus x_k = 0$, we have:
\begin{itemize}
\item Alternative $b_\ell^1$ appears once in positions $i$, $j$ and $k$, and alternative $c_\ell^1$ appears once in positions $i+1$, $j+1$ and $k+1$. Then, the constraint $(b_\ell^1,c_\ell^1)$ corresponds to the inequality $s_i - s_{i+1} + s_j - s_{j+1} + s_k - s_{k+1} > 0$ or, equivalently, $a_i + a_j + a_k > 0$.

\item Alternative $b_\ell^2$ appears once in position $1$ and $d/\epsilon$ times in positions $i+1$, $j+1$ and $k+1$, and alternative $c_\ell^2$ appears $d/\epsilon$ times in positions $i$, $j$ and $k$. The constraint $(b_\ell^2,c_\ell^2)$ corresponds to the inequality $s_1 - \frac{d}{\epsilon}s_i + \frac{d}{\epsilon}s_{i+1} - \frac{d}{\epsilon}s_j + \frac{d}{\epsilon}s_{j+1} - \frac{d}{\epsilon}s_k + \frac{d}{\epsilon}s_{k+1} > 0$ or, equivalently, $a_i + a_j + a_k < \epsilon$ (since $s_1=d$, $a_i=s_i-s_{i+1}$, $a_j=s_j-s_{j+1}$ and $a_k=s_k-s_{k+1}$).

\item Alternative $b_\ell^3$ appears $d$ times in positions $i$, $j$ and $k$, and alternative $c_\ell^3$ appears $2$ times in position $1$ and $d$ times in positions $i+1$, $j+1$ and $k+1$. The constraint $(b_\ell^3,c_\ell^3)$ corresponds to the inequality 
$-2s_1 + ds_i - ds_{i+1}+ ds_j - ds_{j+1}+ ds_k - ds_{k+1} >0$ or, equivalently, $a_i + a_j + a_k > 2$.

\item Alternative $b_\ell^4$ appears $2/\epsilon+1$ times in position $1$ and $d/\epsilon$ times in positions $i+1$, $j+1$ and $k+1$, and alternative $c_\ell^4$ appears $d/\epsilon$ times in positions $i$, $j$ and $k$. The constraint $(b_\ell^4,c_\ell^4)$ corresponds to the inequality 
$\left( \frac{2}{\epsilon}+1 \right)s_1 - \frac{d}{\epsilon}s_i + \frac{d}{\epsilon}s_{i+1} - \frac{d}{\epsilon}s_j + \frac{d}{\epsilon}s_{j+1} - \frac{d}{\epsilon}s_k + \frac{d}{\epsilon}s_{k+1} > 0$ or, equivalently, $a_i + a_j + a_k < 2+\epsilon$.
\end{itemize}

In the case where equation $\ell$ is a $1$-equation of the form $x_i \oplus x_j \oplus x_k = 1$, we have:
\begin{itemize}
\item Alternative $b_\ell^1$ appears $d$ times in positions $i$, $j$ and $k$, and alternative $c_\ell^1$ appears once in position $1$ and $d$ times in positions $i+1$, $j+1$ and $k+1$. Then, the constraint $(b_\ell^1,c_\ell^1)$ corresponds to the inequality 
$-s_1 + ds_i - ds_{i+1} + ds_j - ds_{j+1} + ds_k - ds_{k+1} > 0$ or, equivalently, $a_i + a_j + a_k > 1$ (since $s_1=d$, $a_i=s_i-s_{i+1}$, $a_j=s_j-s_{j+1}$ and $a_k=s_k-s_{k+1}$).

\item Alternative $b_\ell^2$ appears $1/\epsilon+1$ times in position $1$ and $d/\epsilon$ times in positions $i+1$, $j+1$ and $k+1$, and alternative $c_\ell^2$ appears $d/\epsilon$ times in positions $i$, $j$ and $k$. The constraint $(b_\ell^2,c_\ell^2)$ corresponds to the inequality 
$\left( \frac{1}{\epsilon}+1 \right)s_1 - \frac{d}{\epsilon}s_i + \frac{d}{\epsilon}s_{i+1} - \frac{d}{\epsilon}s_j + \frac{d}{\epsilon}s_{j+1} - \frac{d}{\epsilon}s_k + \frac{d}{\epsilon}s_{k+1} > 0$ or, equivalently, $a_i + a_j + a_k < 1+\epsilon$.

\item Alternative $b_\ell^3$ appears $d$ times in positions $i$, $j$ and $k$, and alternative $c_\ell^3$ appears $3$ times in position $1$ and $d$ times in positions $i+1$, $j+1$ and $k+1$. The constraint $(b_\ell^3,c_\ell^3)$ corresponds to the inequality $-3s_1 + ds_i - ds_{i+1} + ds_j - ds_{j+1} + ds_k - ds_{k+1} > 0$ or, equivalently, $a_i + a_j + a_k > 3$.

\item Alternative $b_\ell^4$ appears $3/\epsilon+1$ times in position $1$ and $d/\epsilon$ times in positions $i+1$, $j+1$ and $k+1$, and alternative $c_\ell^4$ appears $d/\epsilon$ times in positions $i$, $j$ and $k$. The constraint $(b_\ell^4,c_\ell^4)$ corresponds to the inequality 
$\left( \frac{3}{\epsilon}+1 \right)s_1 -\frac{d}{\epsilon}s_i + \frac{d}{\epsilon}s_{i+1} - \frac{d}{\epsilon}s_j + \frac{d}{\epsilon}s_{j+1} - \frac{d}{\epsilon}s_k + \frac{d}{\epsilon}s_{k+1} > 0$ or, equivalently, $a_i + a_j + a_k < 3+\epsilon$.
\end{itemize}

In order for this profile to be valid, we use sufficiently many agents and additional alternatives (that do not appear in the constraints) as placeholders, so that the alternatives mentioned above have the appropriate number of appearances in the rankings. 

\paragraph{Proof of correctness}
We now prove that there exists a variable assignment for the MAX-3LIN-2 instance that satisfies $L$ of its equations if and only if there exists a scoring vector that satisfies constraints of total weight $11m+L$. As we have discussed above, this is enough to complete the proof.

Consider an assignment that satisfies $L$ of the equations. Consider the scoring vector defined by setting $a_i = x_i +\epsilon/4$ for $i \in [d-1]$. Recall that we use $\epsilon\leq 1/d$ and, hence, $\sum_{i=1}^{d-1}{a_i}<d$; this is sufficient so that the corresponding scoring vector $\s$ has non-negative entries. This scoring vector satisfies:
\begin{itemize}
\item three out of the four inequalities corresponding to any variable $x_i$, since $a_i = \epsilon/4 \in (0,\epsilon)$ when $x_i= 0$ and $a_i = 1+\epsilon/4 \in (1,1+\epsilon)$ when $x_i = 1$;
\item three out of the four inequalities corresponding to any satisfied $0$-equation $x_i \oplus x_j \oplus x_k = 0$ since $a_i + a_j + a_k = 3\epsilon/4 \in (0,\epsilon)$ when $x_i + x_j + x_k = 0$ and $a_i + a_j + a_k = 2+3\epsilon/4 \in (2,2+\epsilon)$ when $x_i + x_j + x_k = 2$;
\item two out of the four inequalities corresponding to any unsatisfied $0$-equation since $a_i +a_j + a_k \not\in (0,\epsilon)\cup(2,2+\epsilon)$ in that case (observe that $a_i+a_j+a_k=1+3\epsilon/4$ when $x_i+x_j+x_k=1$ and $a_i+a_j+a_k=3+3\epsilon/4$ when $x_i+x_j+x_k=3$);
\item three out of the four inequalities corresponding to any $1$-equation $x_i \oplus x_j \oplus x_k = 1$ since $a_i + a_j + a_k = 1+3\epsilon/4 \in (1,1+\epsilon)$ when $x_i + x_j + x_k = 1$ and $a_i + a_j + a_k = 3+3\epsilon/4 \in (3,3+\epsilon)$ when $x_i + x_j + x_k = 3$;
\item two out of the four inequalities corresponding to any unsatisfied $1$-equation since $a_i +a_j + a_k \not\in (1,1+\epsilon)\cup(3,3+\epsilon)$ then (again, observe that $a_i+a_j+a_k=3\epsilon/4$ when $x_i+x_j+x_k=0$ and $a_i+a_j+a_k=2+3\epsilon/4$ when $x_i+x_j+x_k=2$).
\end{itemize}
Hence, the total weight of the constraints satisfied is $3\sum_{i=1}^n{m_i}+3L+2(m-L) = 11m+L$, since $\sum_{i=1}^n{m_i} = 3m$ due to the fact that all equations have three variables and the sum $\sum_{i=1}^n{m_i}$ accounts for the total number of appearances of all variables in equations.

Conversely, assume that we are given a scoring vector that satisfies constraints of total weight $11m+L$; we will show that there exists an assignment to the variables of the MAX-3LIN-2 instance that satisfies $L$ equations. First, we show that we can transform the scoring vector into a (possibly) different one with $a_i=\epsilon/4$ or $a_i=1+\epsilon/4$ for $i\in [d-1]$, without decreasing the total weight of the satisfied constraints.

For a variable $a_i \not\in (0,\epsilon)\cup(1,1+\epsilon)$ we have that the satisfied inequalities are the following: exactly two out of the four variable inequalities and at most three out of the four inequalities for each of the $m_i$ equations in which the variable $x_i$ appears. This gives a weight of at most $2m_i+3m_i=5m_i$. By setting $a_i = \epsilon/4$, exactly three out of the four variable inequalities and at least two out of the four equation inequalities in which $x_i$ appears are satisfied, for a total weight of at least $5m_i$. Clearly, there is no loss in weight after this change in the value of $a_i$. So, in the following, we can assume that $a_i\in (0,\epsilon)\cup(1,1+\epsilon)$ for every variable $a_i$.

Now, we slightly modify the variable values as follows: for all variables $a_i \in (0,\epsilon)$ we set $a_i = \epsilon/4$ and for all variables $a_i \in (1,1+\epsilon)$ we set $a_i = 1+\epsilon/4$. The set of inequalities containing $a_i$ that were satisfied before the modification are still satisfied after the update as well. This is trivial for the variable inequalities. For $\beta \in \{0,1,2,3\}$, for an equation inequality of the form $a_i+a_j+a_k<\beta+\epsilon$ (respectively, $a_i+a_j+a_k>\beta$) that was satisfied before the modification, at most $\beta$ (respectively, at least $\beta$) of the three variables have values in $(1,1+\epsilon)$ before the modification. Clearly, the inequality is satisfied after the modification as well. 

So, we can assume that we have total weight of $11m+L$ from satisfied constraints with the variables $a_i$ taking values in $\{\epsilon/4, 1+\epsilon/4\}$. Hence, $3\sum_{i=1}^n{m_i}=9m$ comes as weight from satisfied variable inequalities (with three satisfied inequalities per variable). Then, the remaining weight comes from $2m+L$ satisfied equation inequalities. The definition of the reduction implies that there exist $L$ equations in the MAX-3LIN-2 instance so that three among the four corresponding inequalities are satisfied. Then, it is easy to inspect that, if three among the four equation inequalities are satisfied when variables take values in $\{\epsilon/4,1+\epsilon/4\}$, then the (binary) assignment $x_i=a_i-\epsilon/4$ satisfies their corresponding equation as well. This yields an assignment with (at least) $L$ satisfied equations and the proof is complete.
\end{proof}

\section{Experiments}\label{sec:exp}
In this section, we present our experiments, which should be viewed as complementary to our theoretical work in the previous sections. We report on the execution of algorithms on two real-world {\sf OptPSR} instances that we have generated as well as on numerous synthetic ones. In contrast to the theoretical analysis of the approximation algorithms in Section~\ref{sec:apx-approval} which focuses on worst-case instances, here we are mainly interested in the average-case behavior of algorithms or scoring rules in scenarios that are close to ones that are likely to appear in practice. This explains the findings described in the following, according to which the observed performance of simple algorithms/scoring rules is much closer to optimality compared, for example, to the performance guarantees in Theorem~\ref{thm:approval-upper}.

Our experimental setup involves two different scenarios, to which we refer to as {\em ppl} and {\em col}. Each scenario is defined by a set of alternatives and by a {\em profile template} (or, simply, a {\em template}). The alternatives in the ppl scenario are the $48$ countries that are listed in Table~\ref{tab:Populations} of \ref{sec:app}. For the col scenario, the alternatives are the $36$ cities that are listed in Table~\ref{tab:CoL}. The templates consist of equal-sized subsets of alternatives that each agent will be asked to rank. Specifically, the templates consist of $392$ sets of six alternatives each. The distribution of the alternatives to the different sets of the ppl and col templates is almost uniform; each country appears in at least $47$ and at most $52$ sets, while each city appears in at least $57$ and at most $70$ sets. The templates are used to produce profiles as follows. Each profile has exactly $392$ agents. Each agent is given a distinct set of alternatives from the template (see Figure~\ref{fig:input} for examples of such sets for ppl and col) and ranks these alternatives. The profile then consists of the rankings provided by all agents.

\begin{figure}

\centering
\begin{tabular}{|c|c|}
\hline
\ \ \ \ \  & Sydney, Australia \\\hline
\ \ \ \ \  & Oslo, Norway \\\hline
\ \ \ \ \  & Baghdad, Iraq \\\hline
\ \ \ \ \  & Vienna, Austria \\\hline
\ \ \ \ \  & Washington, USA \\\hline
\ \ \ \ \  & London, UK \\
\hline
\end{tabular}
\ \ \ \ \ \ \ \ \ \ \ 
\begin{tabular}{|c|c|}
\hline
\ \ \ \ \  & Greece \\\hline
\ \ \ \ \  & Switzerland \\\hline
\ \ \ \ \  & Nigeria \\\hline
\ \ \ \ \  & Thailand \\\hline
\ \ \ \ \  & China \\\hline
\ \ \ \ \  & Mexico \\
\hline
\end{tabular}

\caption{An example of the sets from the ppl and col templates given to some agent. The particular format was used for building the real-world profiles. The blank column at the left is where the corresponding participant was required to define her ranking by putting distinct numbers from $1$ to $6$.}
\label{fig:input}
\end{figure}

\paragraph{Input profiles} In our experiments, we used both real-world and synthetic data. Two real-world profiles (for the scenarios ppl and col, respectively) were collected as input from $392$ participants\footnote{Actually, we had prepared even more sets of alternatives that could be part of the template; as we did not manage to obtain more inputs, we restricted the templates and synthetic profiles to $392$ agents for comparison reasons.} in the PatrasIQ\footnote{\url{www.patrasiq.gr}} technology exhibition organized by our home institution in April 2016. Each participant was given distinct sets of six countries and six cities (see Figure~\ref{fig:input}) from the ppl and col templates, and was asked to rank the countries in terms of their population and the cities in terms of their cost of living. These two profiles are available at \texttt{preflib.org}~\cite{MW13,MW17} as dataset ED-00034.\footnote{It seems that together with the dataset ED-00025 that were collected and used by Mao et al.~\cite{MPC13}, these are among the very few existing voting profiles with an underlying ground truth ranking.}

Many different synthetic profiles (in each scenario) were obtained by simulating agents who rank alternatives randomly. Specifically, each agent provides a noisy estimate of the correct underlying ranking of the alternatives assigned to her, according to either the Bradley-Terry~\cite{BT52} or the Plackett-Luce~\cite{L59,P75} noise model. Both Bradley-Terry and Plackett-Luce are random utility models~\cite{SPX12,T27}. They are defined using an underlying (positive) utility $u_x$ associated with each alternative $x$ and assume that the correct outcome of the pairwise relation between two alternatives $x$ and $y$ depends on the comparison between the utilities $u_x$ and $u_y$ (the alternative with the highest utility is better). In particular, the utilities of the $48$ alternatives-countries in the ppl scenario are their populations according to information retrieved by \texttt{wikipedia.org} in April 2016 (see Table~\ref{tab:Populations}). Similarly, the utilities of the $36$ alternatives-cities in the col scenario are their cost of living indices as retrieved by \texttt{numbeo.org} during the same time period (see Table~\ref{tab:CoL}).

A Bradley-Terry (BT, in short) agent (implicitly) works as follows. She first decides relations between all pairs of alternatives in her set. For each pair of alternatives $x$ and $y$ with corresponding utilities $u_x$ and $u_y$, the agent decides to rank $x$ above $y$ with probability $\frac{u_x}{u_x+u_y}$ and $y$ above $x$ with probability $\frac{u_y}{u_x+u_y}$. If the relative ranks of all pairs of alternatives in her set (that have been computed separately) define a ranking, then this is the ranking provided by the agent. Otherwise, the whole process is repeated from scratch.

A Plackett-Luce (PL, in short) agent decides the ranking of the alternatives in her set $B$ sequentially. Starting from the first position, the next undetermined position in the ranking is filled by alternative $x \in B$ with probability $\frac{u_x}{\sum_{y \in B} u_y}$. After a random selection, the chosen alternative is removed from $B$ and the process continues for the next undetermined position and the remaining alternatives until all positions are filled.

\paragraph{Constraints} The constraints of the {\sf OptPSR} instances we consider in our experiments were defined as follows. In both scenarios, we have a constraint $(x,y)$ for each ordered pair of alternatives $x$ and $y$ such that $u_x>u_y$. For example, a constraint in the ppl scenario is the pair (China, Switzerland) as China is more populous than Switzerland. We consider three different weightings of the constraints, defining different {\sf OptPSR} instances. In particular, the weight of $(x,y)$ is either $1$, or equal to $u_x-u_y$, or equal to $\log{(u_x-u_y)}$. 

Unit weights are used when we care only about maximizing the number of correctly recovered pairwise comparisons between alternatives. However, there might be pairs that are really important to recover correctly, while some others are not. For example, in the ppl scenario, it might be important to conclude that China is ranked above Switzerland since their population difference is almost $1.3$ billion people. Using this reasoning, an error in the comparison between Cuba and Belgium (both with population around $11$ million) would not be that severe. Weighted and log-weighted (as opposed to unweighted) constraints have been introduced to capture this characteristic.

\paragraph{Evaluation} 

Since all profiles that we experimented with have $d=6$, one would expect that the exact algorithm \texttt{Regions} presented in Section~\ref{sec:constant} would be the obvious choice in order to come up with the optimal scoring rule. Unfortunately, for the size of {\sf OptPSR} instances that we considered (with ${48 \choose 2} =1128$ constraints for ppl and ${36 \choose 2} =630$ constraints for col), \texttt{Regions} (as well as algorithm \texttt{ApxPSR} from Section~\ref{subsec:apxpsr}) turned out to be really slow, even after implementing several heuristics that yield minor performance improvements. This rather disappointing outcome, together with the fact that $d$ is small, forced us to consider scoring vectors with discretized scores (e.g., which are multiples of $0.05$ or $0.02$) in order to come up with approximations of the optimal scoring vector. Similarly, we have implemented a simplified variant of \texttt{ApxPSR}$_2$ by searching over all vectors of the form $(1,s,0,0,0,0)$, $(1,1,1,s,0,0)$, and $(1,1,1,1,1,s)$ where $s$ is a multiple of $0.02$ between $0$ and $1$. For real-world profiles, this approach has yielded the vectors that are shown in Table~\ref{tab:vectors}.

We remark that these variants of \texttt{Regions} and \texttt{ApxPSR}$_2$ are the most time-consuming among the algorithms we have implemented. The total execution time of all experiments (which were conducted using Matlab R2017B) is approximately 44 hours using an Intel 12-core i7 desktop, with 32Gb of RAM, running Windows 7. This is amortized to approximately 10 seconds per instance for the variant of \texttt{Regions} and less than 1 second per instance for the variant of \texttt{ApxPSR}$_2$.

\begin{table}[htbp]
	\centering
	
	\begin{subtable}[t]{\textwidth}
		\centering
		\begin{tabular}{l c c}
			\noalign{\hrule height 1pt}\hline
			rule & ppl  & col \\\hline
			Optimal 		      & {$(1.00, 0.50, 0.35, 0.20, 0.15, 0.05)$}	    & {$(1.00, 0.90, 0.30, 0.30, 0.24, 0.00)$}	  \\
			\texttt{ApxPSR}$_2$ 		      & {$(1.00, 0.02, 0.00, 0.00, 0.00, 0.00)$}	    & {$(1.00, 1.00, 1.00, 0.34, 0.00, 0.00)$}	  \\
			\noalign{\hrule height 1pt}\hline
		\end{tabular}
		\caption{Unweighted constraints}
		\label{tab:unweighted-vectors}
	\end{subtable}
	
	\begin{subtable}[t]{\textwidth}
		\centering
		\begin{tabular}{l c c}
	\noalign{\hrule height 1pt}\hline
	rule & ppl  & col \\\hline
	Optimal 		      & {$(1.00, 0.65, 0.65, 0.35, 0.30, 0.25)$}	    & {$(1.00, 0.68, 0.68, 0.50, 0.22, 0.22)$}	  \\
	\texttt{ApxPSR}$_2$ 		      & {$(1.00, 0.30, 0.00, 0.00, 0.00, 0.00)$}	    & {$(1.00, 1.00, 1.00, 0.52, 0.00, 0.00)$}	  \\
	\noalign{\hrule height 1pt}\hline
\end{tabular}
	\caption{Weighted constraints}
		\label{tab:weighted-vectors}
	\end{subtable}
	
	\begin{subtable}[t]{\textwidth}
		\centering
		\begin{tabular}{l c c}
	\noalign{\hrule height 1pt}\hline
	rule & ppl  & col \\\hline
	Optimal 		      & {$(1.00, 0.60, 0.42, 0.20, 0.18, 0.08)$}	    & {$(1.00, 0.65, 0.65, 0.35, 0.30, 0.25)$}	  \\
	\texttt{ApxPSR}$_2$ 		      & {$(1.00, 0.02, 0.00, 0.00, 0.00, 0.00)$}	    & {$(1.00, 1.00, 1.00, 0.52, 0.00, 0.00)$}	  \\
	\noalign{\hrule height 1pt}\hline
\end{tabular}
\caption{Log-weighted constraints}
		\label{tab:log-weighted-vectors}
	\end{subtable}
	\caption{The positional scoring vectors returned by the variants of the optimal algorithm and algorithm 
		\texttt{ApxPSR}$_2$ on the real-world profiles. These vectors follow by searching a discretized spaces with scores that are multiples of either $0.05$ or $0.02$. }
	\label{tab:vectors}
\end{table} 

We compare the optimal {\sf OptPSR} solution (obtained as described above) to the scoring vector returned by algorithm \texttt{BestApproval} which was presented in Section~\ref{sec:apx-approval}, the variant of \texttt{ApxPSR}$_2$ (also implemented as described above) and to two well-known scoring rules: the \texttt{Borda} count scoring rule that is defined by the scoring vector $(5,4,3,2,1,0)$, as well as the \texttt{Harmonic} scoring rule (also known as Dowdall) which is defined by the vector $(1, 1/2,1/3, 1/4, 1/5, 1/6)$. Table~\ref{tab:avg-results} shows the performance of these algorithms/scoring rules in all {\sf OptPSR} instances that we experimented with. Each performance value is expressed as a percentage of the total weight of the constraints satisfied by an algorithm/scoring rule compared to the total weight of all constraints. The two columns labeled ``real data'' contain values that correspond to a single execution of an algorithm/scoring rule on the two real-world ppl and col profiles. The values in the remaining columns are averages over executions of algorithms/scoring rules on $1000$ (random) profiles with either BT or PL agents as well as their standard deviations. Table~\ref{tab:avg-results} is split in three parts to report the results for {\sf OptPSR} instances with unweighted (Table~\ref{tab:unweighted}), weighted (Table~\ref{tab:weighted}), and log-weighted constraints (Table~\ref{tab:log-weighted}).

\begin{table}[htbp]
\centering

\begin{subtable}[t]{\textwidth}
\centering
\begin{tabular}{l cc cccc cccc}
\noalign{\hrule height 1pt}\hline
& \multicolumn{2}{c}{real data}	& \multicolumn{4}{c}{synthetic (BT)} & \multicolumn{4}{c}{synthetic (PL)}	\\
& ppl  & col & \multicolumn{2}{c}{ppl} & \multicolumn{2}{c}{col} & \multicolumn{2}{c}{ppl} & \multicolumn{2}{c}{col}		\\
rule &  &  & avg & std & avg & std & avg & std & avg & std \\\hline
Optimal 		      & {81.83}	    & {83.97}	   & {94.54} 	    & 0.642   & {93.74}	     & 1.037	& {93.19} 	    & 0.916           & {88.20}	    & 1.867\\
\texttt{ApxPSR}$_2$ 		      & {80.50}	    & {82.22}	   & {93.22} 	    & 0.774   & {\bf 92.74}	     & 1.019	& {\bf 92.04} 	    & 0.963           & {\bf 86.68}	    & 1.941\\
\texttt{Borda}  & 79.96	    & 82.06 	       & {\bf 93.43} & 0.706   & {92.04} & 1.112	& {91.94} &0.983	        & {85.95} & 1.984\\
\texttt{Harmonic}         & {\bf 80.94} & {\bf 82.54}  & 92.85 		 & 0.746   & 91.35        & 1.297	   & 90.50	    & 1.109   & 83.18		     & 2.321	\\	
\texttt{BestApproval} & 79.43		        & 80.48	          & 91.72		 & 0.844   & 91.42		 & 1.080	   & 90.54     & 0.986   & 84.68		     & 1.929	\\
\noalign{\hrule height 1pt}\hline
\end{tabular}
\caption{Unweighted constraints}
\label{tab:unweighted}
\end{subtable}

\begin{subtable}[t]{\textwidth}
\centering
\begin{tabular}{l cc cccc cccc}
\noalign{\hrule height 1pt}\hline
& \multicolumn{2}{c}{real data}	& \multicolumn{4}{c}{synthetic (BT)} & \multicolumn{4}{c}{synthetic (PL)}	\\
& ppl  & col & \multicolumn{2}{c}{ppl} & \multicolumn{2}{c}{col} & \multicolumn{2}{c}{ppl} & \multicolumn{2}{c}{col}		\\
rule &  & & avg & std & avg & std & avg & std & avg & std \\\hline
Optimal 				& {95.98}	                 & {92.93}	 	   & {99.66} 	   & 0.078   & {98.69} 	  & 0.364  & {99.49}     & 0.134	   & {96.02}	  & 1.079\\
\texttt{ApxPSR}$_2$ 				& {\bf 95.62}	                 & {91.95}	 	   & {99.47} 	   & 0.117   & {\bf 98.34} 	  & 0.412  & {\bf 99.31}     & 0.172	   & {\bf 95.13}	  & 1.236\\
\texttt{Borda}	        & 94.67		         & 91.92		       & {\bf 99.52} & 0.102	& {98.07} & 0.447	& {99.30} & 0.166  & {94.82} & 1.260 \\
\texttt{Harmonic}		& {95.42} & {\bf 92.01}   & 99.42 	       & 0.124	& 97.80		  & 0.542 & 99.04		  & 0.229	& 92.67	      & 1.769\\	
\texttt{BestApproval}	& 95.24		 & 90.83		       & 99.34		       & 0.143	& 97.90		  & 0.472  	& 99.10		  & 0.196   & 94.18		  & 1.313
\\
\noalign{\hrule height 1pt}\hline
\end{tabular}
\caption{Weighted constraints}
\label{tab:weighted}
\end{subtable}

\begin{subtable}[t]{\textwidth}
\centering
\begin{tabular}{l cc cccc cccc}
\noalign{\hrule height 1pt}\hline
& \multicolumn{2}{c}{real data}	& \multicolumn{4}{c}{synthetic (BT)} & \multicolumn{4}{c}{synthetic (PL)}	\\
& ppl  & col & \multicolumn{2}{c}{ppl} & \multicolumn{2}{c}{col} & \multicolumn{2}{c}{ppl} & \multicolumn{2}{c}{col}		\\
rule &  &  & avg & std & avg & std & avg & std & avg & std \\\hline
Optimal					& {83.03}		& {88.21}			& {95.29}		& 0.553 & {97.02}   	& 0.717 & {94.06}		& 0.768 & {92.53} & 1.487\\
\texttt{ApxPSR}$_2$					& {81.56}		& {86.82}			& {94.07}		& 0.681 & {\bf 96.24}   	& 0.765 & {\bf 93.06}		& 0.887 & {\bf 91.24} & 1.695\\
\texttt{Borda}			& 81.05	 		& 86.91				& {\bf 94.31}	& 0.596 & {95.87}	& 0.841 & {92.94}	& 0.827  & {90.75} & 1.663 \\
\texttt{Harmonic}		& {\bf 82.15}	& {\bf 87.24}		& 93.75			& 0.674 & 95.32			& 0.952 & 91.59			& 1.015 & 88.08 & 2.076 \\
\texttt{BestApproval}	& 80.56	   		& 85.48   			& 92.79			& 0.722 & 95.52			& 0.841 & 91.65			& 0.872 & 89.77 & 1.633 \\
\noalign{\hrule height 1pt}\hline
\end{tabular}
\caption{Log-weighted constraints}
\label{tab:log-weighted}
\end{subtable}
\caption{Performance of algorithms/scoring rules on instances with weighted constraints. Each value is a percentage and denotes the ratio of the total weight of the satisfied constraints over the total weight of all constraints. For synthetic profiles with BT and PL agents, the values indicate average performance (avg) and standard deviations (std) from $1000$ simulations (on randomly generated BT and PL profiles), while for real-world data each value corresponds to a single execution of an algorithm/scoring rule. The performance of the best among the non-optimal scoring rules is marked in bold. As the ppl and col instances have 1128 and 630 constraints, a respective difference of $0.09$ and $0.16$ in performance in subtable (a) corresponds to a gain difference of one constraint.}
\label{tab:avg-results}
\end{table} 

Our results indicate that \texttt{Harmonic} is better than both \texttt{Borda} and \texttt{BestApproval} in {\sf OptPSR} instances with real-world profiles. \texttt{Harmonic} is also better than \texttt{ApxPSR}$_2$ in all real-world profiles besides ppl with weighted constraints. On the other hand, \texttt{ApxPSR}$_2$ is better the best algorithm for all synthetic profiles besides ppl with BT agents, where \texttt{Borda} is slightly better. Even though \texttt{BestApproval} is never the best among the four algorithms/scoring rules, it always provides competitive results. The performance values of algorithms/scoring rules on the real-world profiles are considerably inferior to those for synthetic profiles. This indicates that the BT and PL noise models are rather idealized and do not reflect accurately the behavior of the agents in our real-world inputs. Still, as they allowed for many executions, they made possible the assessment of the algorithms/scoring rules in terms of robustness, as we will see shortly.

Table~\ref{tab:weighted} indicates that the performance values for weighted constraints are considerably higher than those for unweighted ones. This is to be expected since the total weight of correctly recovered constraints improves significantly when heavy pairwise relations are satisfied. The performance values for log-weighted constraints seem to lie in-between.

The low standard deviation values in Table~\ref{tab:avg-results} for synthetic profiles indicate that the performance values measured for each algorithm/scoring rule in the $1000$ executions are always sharply concentrated around their average values. Indicatively, we demonstrate this by plotting a point for each execution, which has the performance value of the optimal scoring rule as $x$-coordinate and the performance value of \texttt{Borda} as the $y$-coordinate. The twelve $1000$-point clouds in Figure~\ref{fig:Borda} correspond to the three different constraint weightings and the profiles with BT and PL agents for the ppl and col scenarios. All points in these clouds lie below the red dashed diagonal as \texttt{Borda} is in general suboptimal. The clouds of points comparing the optimal scoring rule with the remaining algorithms/scoring rules (not reported here) have similar structure and, like Figure~\ref{fig:Borda}, suggest that the average values in Table~\ref{tab:avg-results} robustly characterize the performance of algorithms/scoring rules on all the synthetic instances that we have considered. Another interesting observation from these experiments is that the optimal scoring vectors corresponding to the different points of the clouds in Figure~\ref{fig:Borda} are, in general, very different, in spite of the fact that the optimal performance values are so concentrated.


\begin{figure*}[p]
\centering

\begin{subfigure}{0.3\textwidth}
   \includegraphics[scale=0.36]{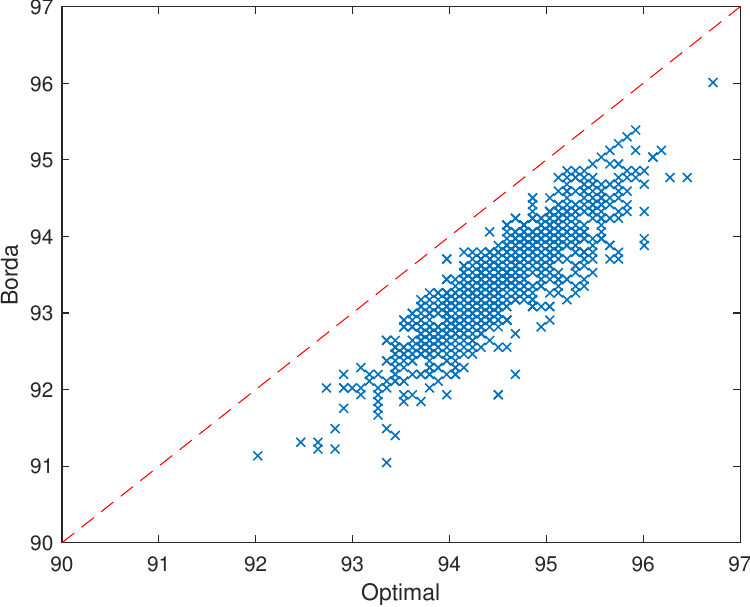} 
   \caption{BT, ppl, unweighted}
\end{subfigure}
\begin{subfigure}{0.3\textwidth}
   \includegraphics[scale=0.36]{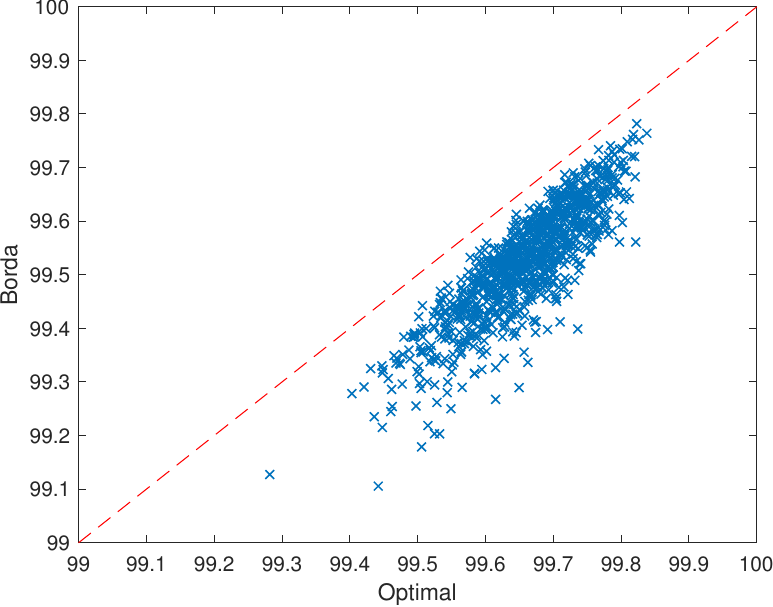} 
   \caption{BT, ppl, weighted}
\end{subfigure}
\begin{subfigure}{0.3\textwidth}
   \includegraphics[scale=0.36]{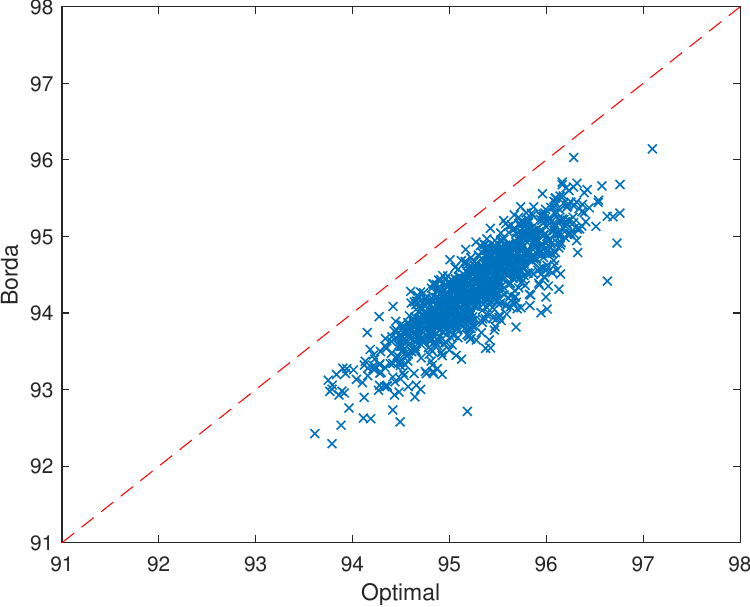} 
   \caption{BT, ppl, log-weighted}
\end{subfigure}
\vspace{10pt}

\begin{subfigure}{0.3\textwidth}
   \includegraphics[scale=0.36]{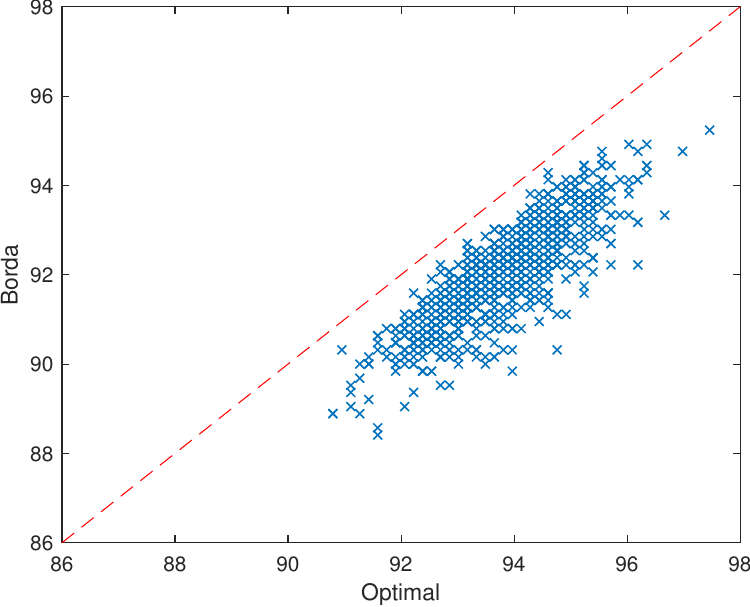} 
   \caption{BT, col, unweighted}
\end{subfigure}
\begin{subfigure}{0.3\textwidth}
   \includegraphics[scale=0.36]{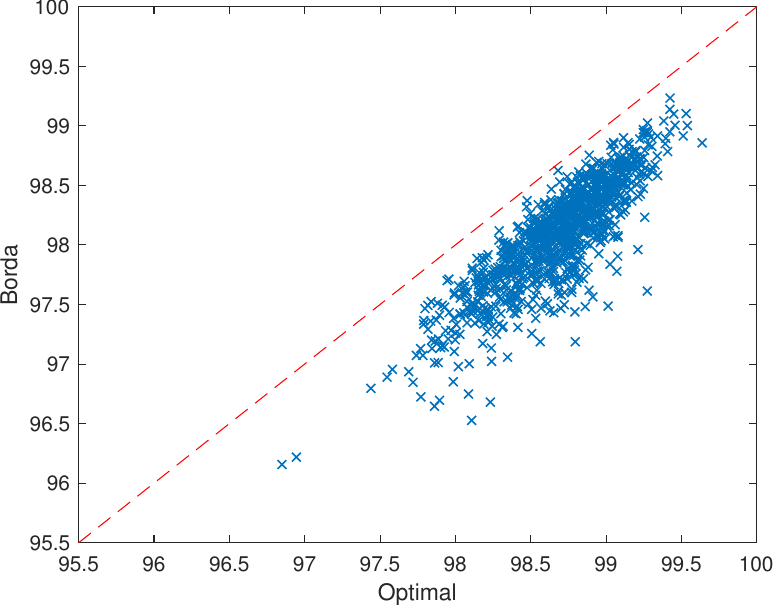} 
   \caption{BT, col, weighted}
\end{subfigure}
\begin{subfigure}{0.3\textwidth}
   \includegraphics[scale=0.36]{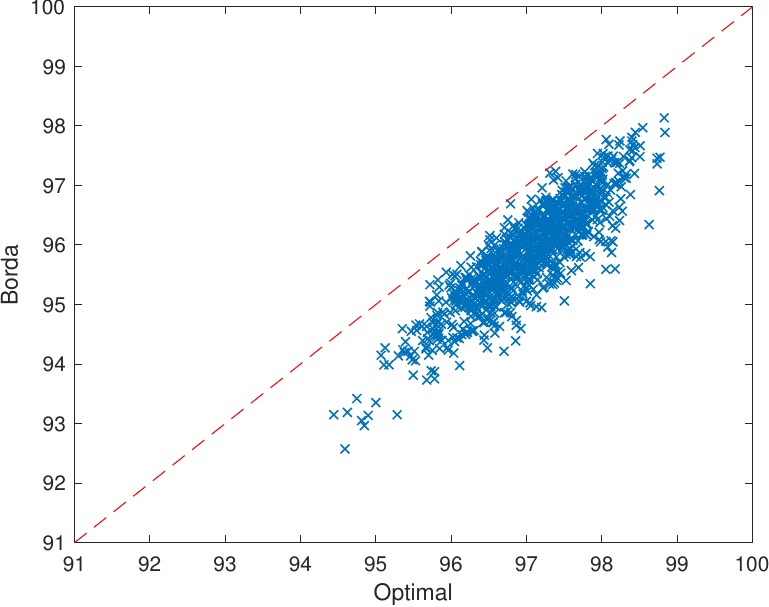} 
   \caption{BT, col, log-weighted}
\end{subfigure}
\vspace{10pt}

\begin{subfigure}{0.3\textwidth}
   \includegraphics[scale=0.36]{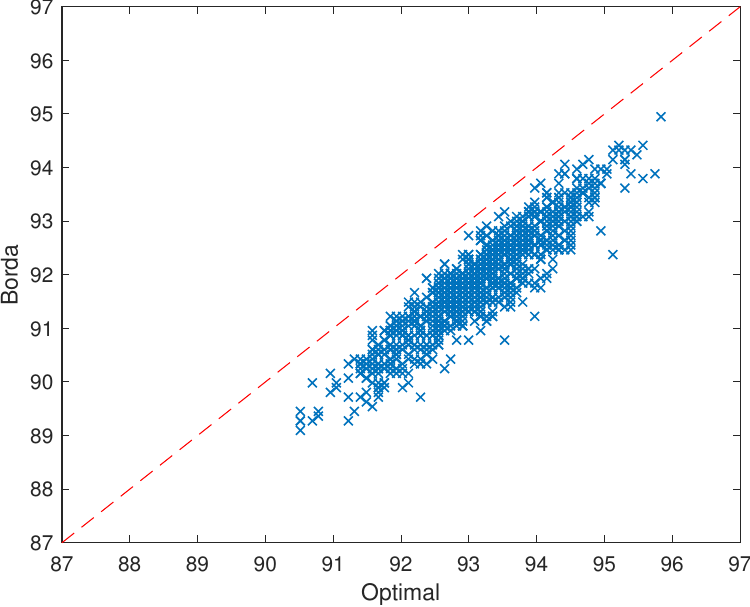} 
   \caption{PL, ppl, unweighted}
\end{subfigure}
\begin{subfigure}{0.3\textwidth}
   \includegraphics[scale=0.36]{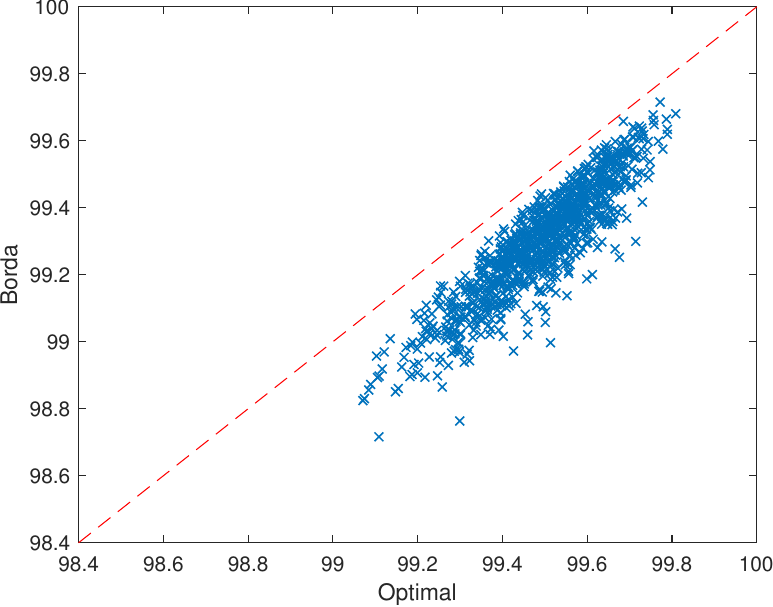} 
   \caption{PL, ppl, weighted}
\end{subfigure}
\begin{subfigure}{0.3\textwidth}
   \includegraphics[scale=0.36]{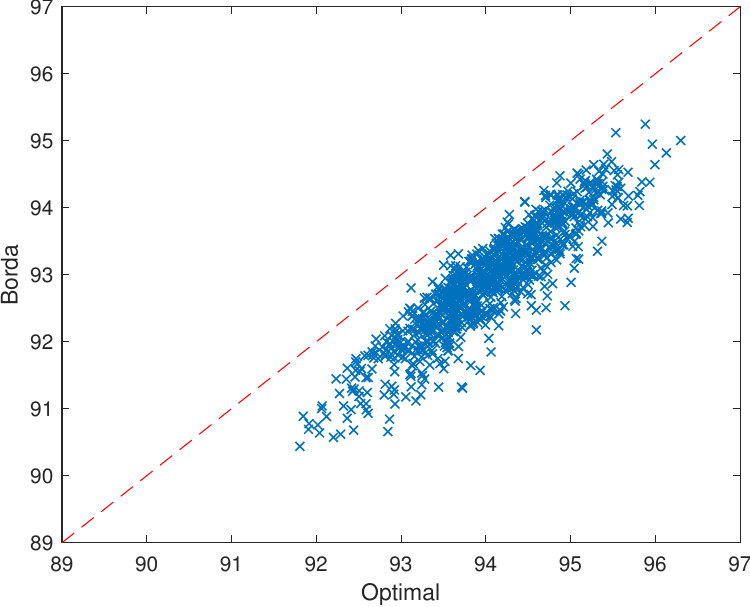} 
   \caption{PL, ppl, log-weighted}
\end{subfigure}
\vspace{10pt}

\begin{subfigure}{0.3\textwidth}
   \includegraphics[scale=0.36]{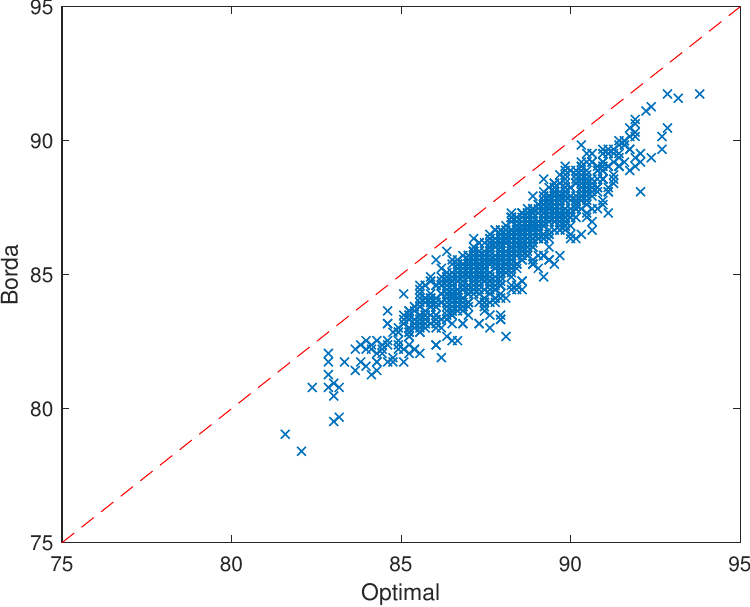} 
   \caption{PL, col, unweighted}
\end{subfigure}
\begin{subfigure}{0.3\textwidth}
   \includegraphics[scale=0.36]{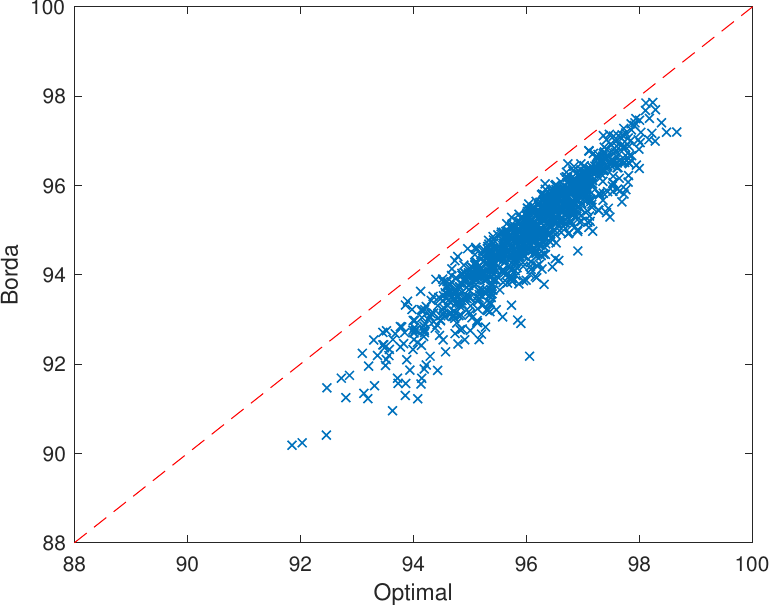} 
   \caption{PL, col, weighted}
\end{subfigure}
\begin{subfigure}{0.3\textwidth}
   \includegraphics[scale=0.36]{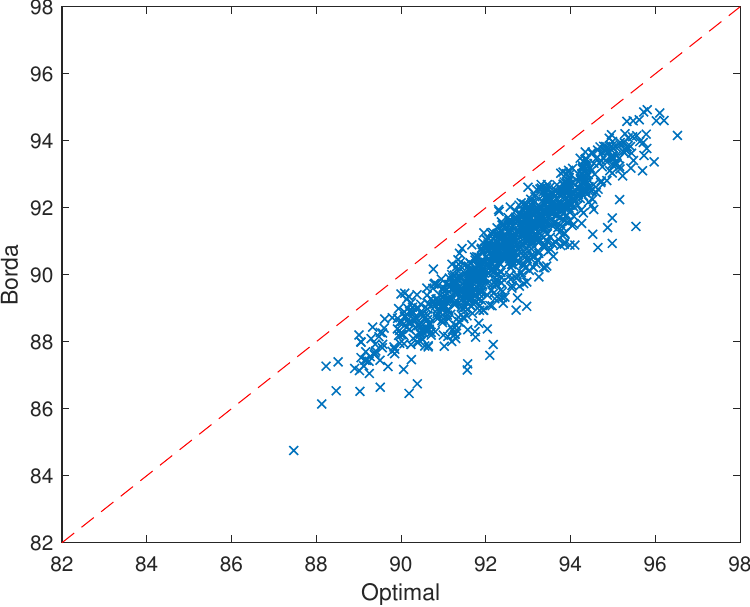} 
   \caption{PL, col, log-weighted}
\end{subfigure}

\caption{Comparing \texttt{Borda} with the optimal {\sf OptPSR} solution. Each cloud consists of $1000$ points, each corresponding to a distinct simulation. The $x$-coordinate of each point is the performance value of the optimal scoring rule and the $y$-coordinate is the performance value of \texttt{Borda}. The caption of each subfigure indicated the type of agents (BT or PL), the scenario (ppl or col), and the constraint weighting (unweighted, weighted, or log-weighted).}

\label{fig:Borda}
\end{figure*}


\section{Conclusions and open problems}\label{sec:open}
Motivated by crowdsourcing and rating applications, we have introduced and studied the {\sf OptPSR} problem. Very informally, the problem is to compute a positional scoring rule, whose outcome when applied on a given profile is as close as possible to constraints corresponding to underlying correct pairwise relations between alternatives. We have presented the algorithm \texttt{Regions} that solves the problem exactly by cleverly searching the space of all candidate scoring vectors and exploiting linear programming. The algorithm runs in polynomial time when the parameter $d$ (representing the number of alternatives in the ranking of each agent) is constant. We also consider approximation algorithms for {\sf OptPSR}. A simple algorithm, called \texttt{BestApproval}, that selects among all approval vectors the one that satisfies constraints of the highest weight is shown to achieve a tight approximation ratio of $1/d$. Our more sophisticated algorithm \texttt{ApxPSR} can achieve better approximation ratios at the expense of higher running times. We complement these positive results by showing that {\sf OptPSR} is a hard-to-approximate optimization problem. We also present an experimental evaluation of algorithms and scoring rules on real-world and synthetic instances.

Our work reveals several open problems. First, we would like to design exact algorithms that are practical. Our ambitious goal here is to be able to solve ---in reasonable time--- {\sf OptPSR} instances like the ones we used in our experiments (i.e., with $d$ up to $10$, approximately $50$ alternatives, and $1000$ constraints). 

Second, we would like to determine the approximability of {\sf OptPSR}. Currently, there is a huge gap between our positive algorithmic results in Theorems~\ref{thm:approval-upper} and~\ref{thm:apxpsr} and the inapproximability bound from Theorem~\ref{thm:apx-hard}; the former have a dependence on $d$ while the latter is a constant close to $1$. Is there a polynomial time algorithm with constant approximation ratio? Is there a sub-constant inapproximability bound? These questions are very important from the theoretical point of view. More importantly, we would like to design practical approximation algorithms that will be effective on huge {\sf OptPSR} instances. Here, we need both simplicity and efficiency; achieving these two goals simultaneously seems elusive at this point.

Third, observe that our theoretical results in Section~\ref{sec:apx-approval} focus on worst-case approximation guarantees. In addition to such studies, we would like to conduct theoretical analysis in random profiles that have been produced by Plackett-Luce or Bradley-Terry agents or, more importantly, by more realistic agents who follow appropriate generalizations of random utility models (see~\cite{SPX12}). In particular, the following optimization problem that is inspired by the flavor of our experiments is very appealing: Given a template, constraints, and statistical information (e.g., a noise model) describing the behavior of agents, compute the best algorithm or scoring rule that maximizes the expected total weight of satisfied constraints. Here, the expectation is taken over random {\sf OptPSR} instances with agents following the given noise model that are asked to rank the sets of alternatives in the template.

Finally, our definition of {\sf OptPSR} assumes that all agents rank the same number of alternatives. This feature has been used for proof-of-concept purposes here but, admittedly, it could be very restrictive in many applications. Extending {\sf OptPSR} by allowing different numbers of alternatives per agent (and more general definitions of scoring vectors) is important.  

Thinking beyond {\sf OptPSR}, one could consider optimization problems of similar flavor by replacing positional scoring rules by a class of voting rules defined over incomplete votes which can be identified by a small number of parameters (in the same way in which positional scoring rules are identified by the position scores). One possibility might be to consider the class of Kemeny-like voting rules which given a profile of possibly incomplete votes computes a full ranking of the alternatives that has the minimum possible total distance from the votes of the profile. Each voting rule in this class is identified by a distance function between rankings. Kemeny is such a voting rule for the Kendall-tau distance function (see~\cite{Z16}). Then, a natural optimization problem would aim for optimizing the distance function parameters so that the resulting voting rule, when applied on the given profile, returns a ranking that is as close as possible to the constraints of an underlying true ranking. This direction might be worth studying, taking extra care of computational issues (e.g., Kemeny is computationally hard to resolve) that do not arise in {\sf OptPSR}.

\bibliographystyle{plain}
\bibliography{incomplete}

\begin{thebibliography}{10}

\bibitem{A96}
N.~Alon.
\newblock Tools from higher algebra.
\newblock In R.~L. Graham, M.~Gr\"{o}tschel, and L.~Lov\'{a}sz, editors, {\em
  Handbook of Combinatorics}, volume~2, pages 1749--1783. MIT Press, 1996.

\bibitem{SPX12}
H.~{Azari Soufiani}, D.~C. Parkes, and L.~Xia.
\newblock Random utility theory for social choice.
\newblock In {\em Proceeding of the 26th Annual Conference on Neural
  Information Processing Systems (NIPS)}, pages 126--134, 2012.

\bibitem{ALMRW16}
H.~Aziz, O.~Lev, N.~Mattei, J.~S. Rosenschein, and T.~Walsh.
\newblock Strategyproof peer selection: Mechanisms, analyses, and experiments.
\newblock In {\em Proceedings of the 30th {AAAI} Conference on Artificial
  Intelligence}, pages 397--403, 2016.

\bibitem{BFL12}
Dorothea Baumeister, Piotr Faliszewski, J{\'{e}}r{\^{o}}me Lang, and J{\"{o}}rg
  Rothe.
\newblock Campaigns for lazy voters: truncated ballots.
\newblock In {\em International Conference on Autonomous Agents and Multiagent
  Systems ({AAMAS})}, pages 577--584, 2012.

\bibitem{BCH+15}
C.~Boutilier, I.~Caragiannis, S.~Haber, T.~Lu, A.~D. Procaccia, and O.~Sheffet.
\newblock Optimal social choice functions: {A} utilitarian view.
\newblock {\em Artificial Intelligence}, 227:190--213, 2015.

\bibitem{BR16}
Craig Boutilier and Jeffrey~S. Rosenschein.
\newblock Incomplete information and communication in voting.
\newblock In {\em Handbook of Computational Social Choice}, pages 223--258.
  2016.

\bibitem{BT52}
R.~A. Bradley and M.~E. Terry.
\newblock Rank analysis of incomplete block designs: I. the method of paired
  comparisons.
\newblock {\em Biometrika}, 39(3/4):324--345, 1952.

\bibitem{BC+16}
F.~Brandt, V.~Conitzer, U.~Endriss, J.~Lang, H.~Moulin, and A.~D. Procaccia.
\newblock {\em Handbook of Computational Social Choice}.
\newblock Cambridge University Press, 2016.

\bibitem{BM08}
Mark Braverman and Elchanan Mossel.
\newblock Noisy sorting without resampling.
\newblock In {\em Proceedings of the 19th Annual {ACM-SIAM} Symposium on
  Discrete Algorithms ({SODA})}, pages 268--276, 2008.

\bibitem{CKV15}
I.~Caragiannis, G.~A. Krimpas, and A.~A. Voudouris.
\newblock Aggregating partial rankings with applications to peer grading in
  massive online open courses.
\newblock In {\em Proceedings of the 14th International Conference on
  Autonomous Agents {\&} Multiagent Systems (AAMAS)}, pages 675--683, 2015.

\bibitem{CKV16}
I.~Caragiannis, G.~A. Krimpas, and A.~A. Voudouris.
\newblock How effective can simple ordinal peer grading be?
\newblock In {\em Proceedings of the 17th {ACM} Conference on Economics and
  Computation (EC)}, pages 323--340, 2016.

\bibitem{CPS16}
I.~Caragiannis, A.~D. Procaccia, and N.~Shah.
\newblock When do noisy votes reveal the truth?
\newblock {\em {ACM} Transactions on Economics and Computation},
  4(3):15:1--15:30, 2016.

\bibitem{CK14}
Flavio Chierichetti and Jon~M. Kleinberg.
\newblock Voting with limited information and many alternatives.
\newblock {\em {SIAM} Journal of Computing}, 43(5):1615--1653, 2014.

\bibitem{CS05}
V.~Conitzer and T.~Sandholm.
\newblock Common voting rules as maximum likelihood estimators.
\newblock In {\em Proceedings of the 21st Conference in Uncertainty in
  Artificial Intelligence {(UAI)}}, pages 145--152, 2005.

\bibitem{C85}
Marquis de~Condorcet.
\newblock Essai sur l'application de l'analyse \`a la probabilit\'e de
  d\'ecisions rendues \`a la pluralit\'e de voix.
\newblock Imprimerie Royal, 1785.
\newblock Facsimile published in 1972 by Chelsea Publishing Company, New York.

\bibitem{WGS16}
M.~M. de~Weerdt, E.~H. Gerding, and S.~Stein.
\newblock Minimising the rank aggregation error.
\newblock In {\em Proceedings of the 15th International Conference on
  Autonomous Agents {\&} Multiagent Systems (AAMAS)}, pages 1375--1376, 2016.

\bibitem{DLC15}
John~A. Doucette, Kate Larson, and Robin Cohen.
\newblock Conventional machine learning for social choice.
\newblock In {\em Proceedings of the Twenty-Ninth {AAAI} Conference on
  Artificial Intelligence}, pages 858--864, 2015.

\bibitem{DKNS01}
C.~Dwork, R.~Kumar, M.~Naor, and D.~Sivakumar.
\newblock Rank aggregation methods for the web.
\newblock In {\em Proceedings of the 10th International World Wide Web
  Conference (WWW)}, pages 613--622, 2001.

\bibitem{ES16}
Edith Elkind and Arkadii Slinko.
\newblock Rationalizations of voting rules.
\newblock In {\em Handbook of Computational Social Choice}, pages 169--196.
  2016.

\bibitem{FO14}
Y.~Filmus and J.~Oren.
\newblock Efficient voting via the top-k elicitation scheme: {A} probabilistic
  approach.
\newblock In {\em Proceedings of the 15th {ACM} conference on Economics and
  Computation (EC)}, pages 295--312, 2014.

\bibitem{H01}
J.~H{\aa}stad.
\newblock Some optimal inapproximability results.
\newblock {\em Journal of the {ACM}}, 48(4):798--859, 2001.

\bibitem{H13}
G.~A. Hazelrigg.
\newblock Dear colleague letter: Information to principal investigators ({PI}s)
  planning to submit proposals to the {S}ensors and {S}ensing {S}ystems ({SSS})
  program {O}ctober 1, 2013, deadline.
\newblock NSF website, 2013.

\bibitem{KL05}
K.~Konczak and J{\'{e}}r{\^{o}}me Lang.
\newblock Voting procedures with incomplete preferences.
\newblock In {\em Proceedings of the Multidisciplinary {IJCAI}-05 Workshop on
  Advances in Preference Handling}, pages 124--129, 2005.

\bibitem{KLMP15}
David Kurokawa, Omer Lev, Jamie Morgenstern, and Ariel~D. Procaccia.
\newblock Impartial peer review.
\newblock In {\em Proceedings of the 24th International Joint Conference on
  Artificial Intelligence {(IJCAI)}}, pages 582--588, 2015.

\bibitem{LA11}
E.~Law and L.~von Ahn.
\newblock {\em Human computation}.
\newblock Morgan \& Claypool Publishers, 2011.

\bibitem{LB11}
T.~Lu and C.~Boutilier.
\newblock Effective sampling and learning for mallows models with
  pairwise-preference data.
\newblock {\em Journal of Machine Learning Research}, 15:3963--4009, 2014.

\bibitem{LB11a}
Tyler Lu and Craig Boutilier.
\newblock Robust approximation and incremental elicitation in voting protocols.
\newblock In {\em Proceedings of the 22nd International Joint Conference on
  Artificial Intelligence ({IJCAI})}, pages 287--293, 2011.

\bibitem{LB13}
Tyler Lu and Craig Boutilier.
\newblock Multi-winner social choice with incomplete preferences.
\newblock In {\em Proceedings of the 23rd International Joint Conference on
  Artificial Intelligence ({IJCAI})}, pages 263--270, 2013.

\bibitem{L59}
R.~D. Luce.
\newblock {\em Individual choice behavior: A theoretical analysis}.
\newblock Wiley, 1959.

\bibitem{M57}
C.~L. Mallows.
\newblock Non-null ranking models.
\newblock {\em Biometrika}, 44:114--130, 1957.

\bibitem{MPC13}
A.~Mao, A.~D. Procaccia, and Y.~Chen.
\newblock Better human computation through principled voting.
\newblock In {\em Proceedings of the 27th {AAAI} Conference on Artificial
  Intelligence (AAAI)}, pages 1142--1148, 2013.

\bibitem{MW13}
N.~Mattei and T.~Walsh.
\newblock Preflib: {A} library for preferences http://www.preflib.org.
\newblock In {\em Proceedings of the 3rd International Conference on
  Algorithmic Decision Theory (ADT)}, pages 259--270, 2013.

\bibitem{MW17}
Nicholas Mattei and Toby Walsh.
\newblock A \textsc{PrefLib.org} retrospective: {L}essons learned and new
  directions.
\newblock In Ulle Endriss, editor, {\em Trends in Computational Social Choice},
  chapter~15, pages 289--305. AI Access, 2017.

\bibitem{MS09}
Michael~R. Merrifield and Donald~G. Saari.
\newblock Telescope time without tears: a distributed approach to peer review.
\newblock {\em Astronomy and Geophysics}, 50(4):4.2--4.6, 2009.

\bibitem{PRVW11}
Maria~Silvia Pini, Francesca Rossi, Kristen~Brent Venable, and Toby Walsh.
\newblock Incompleteness and incomparability in preference aggregation:
  Complexity results.
\newblock {\em Artificial Intelligence}, 175(7-8):1272--1289, 2011.

\bibitem{P75}
R.~L. Plackett.
\newblock The analysis of permutations.
\newblock {\em Journal of the Royal Statistical Society. Series C (Applied
  Statistics)}, 24(2):193--202, 1975.

\bibitem{PZ+09}
A.~D. Procaccia, A.~Zohar, Y.~Peleg, and J.~S. Rosenschein.
\newblock The learnability of voting rules.
\newblock {\em Artificial Intelligence}, 173(12-13):1133--1149, 2009.

\bibitem{RJ14}
K.~Raman and T.~Joachims.
\newblock Methods for ordinal peer grading.
\newblock In {\em Proceedings of the 20th {ACM} {SIGKDD} International
  Conference on Knowledge Discovery and Data Mining (KDD)}, pages 1037--1046,
  2014.

\bibitem{S07}
D.~Sculley.
\newblock Rank aggregation for similar items.
\newblock In {\em Proceedings of the 7th {SIAM} International Conference on
  Data Mining ({SDM})}, pages 587--592, 2007.

\bibitem{SBP+13}
N.~B. Shah, J.~K. Bradley, A.~Parekh, M.~Wainwright, and K.~Ramchandran.
\newblock A case for ordinal peer-evaluation in {MOOCs}.
\newblock In {\em Neural Information Processing Systems (NIPS): Workshop on
  Data Driven Education}, 2013.

\bibitem{SW17}
N.~B. Shah and M.~J. Wainwright.
\newblock Simple, robust and optimal ranking from pairwise comparisons.
\newblock {\em Journal of Machine Learning Research}, 18:199:1--199:38, 2017.

\bibitem{T27}
L.~L. Thurstone.
\newblock A law of comparative judgement.
\newblock {\em Psychological Review}, 34(4):273--286, 1927.

\bibitem{W68}
H.E. Warren.
\newblock Lower bounds for approximation by non-linear manifolds.
\newblock {\em Transaction of the American Mathematical Society}, 133:167--178,
  1968.

\bibitem{XC11}
L.~Xia and V.~Conitzer.
\newblock A maximum likelihood approach towards aggregating partial orders.
\newblock In {\em Proceedings of the 22nd International Joint Conference on
  Artificial Intelligence {(IJCAI)}}, pages 446--451, 2011.

\bibitem{XC11a}
Lirong Xia and Vincent Conitzer.
\newblock Determining possible and necessary winners given partial orders.
\newblock {\em Journal of Artificial Intelligence Research}, 41:25--67, 2011.

\bibitem{Y88}
H.~P. Young.
\newblock Condorcet's theory of voting.
\newblock {\em The American Political Science Review}, 82(4):1231--1244, 1988.

\bibitem{Z16}
William~S. Zwicker.
\newblock Introduction to the theory of voting.
\newblock In {\em Handbook of Computational Social Choice}, pages 23--56. 2016.

\end{thebibliography}

\appendix

\section{Alternatives and utilities used in experiments}\label{sec:app}
Here we present the alternatives and the utilities that define the correct underlying ranking, the constraints, and the corresponding weightings in our experiments. Table~\ref{tab:Populations} contains the list of $48$ countries and their populations sorted in descending order, as retrieved from {\tt wikipedia} in April 2016 (ppl scenario). Table~\ref{tab:CoL} contains the list of $36$ cities and their corresponding cost of living index sorted in decreasing order, as retrieved from the website {\tt numbeo.com} in April 2016 (col scenario).

\begin{table}[h!]
\centering
\begin{tabular}{l r l r l r}
\noalign{\hrule height 1pt}\hline
countries 	& 	population 		& countries 	& 	population 	& countries 	& 	population \\\hline
China		&	1,375,880,000	& Iran			&	79,149,100	& Peru			&	31,488,700	\\
India		& 	1,287,180,000	& Turkey		&	78,741,053	& Australia		&	24,051,600	\\
USA			&	323,225,000	    & Thailand		&	65,273,832	& Romania		&	19,861,000	\\
Indonesia	&	258,705,000	    & Great Britain	&	65,097,000	& Chile			&	18,191,900	\\
Brazil		&	205,900,000	    & France		&	64,543,000	& Netherlands	&	17,003,600	\\
Pakistan	&	193,295,802	    & Italy			&	60,676,361	& Belgium		&	11,312,444	\\
Nigeria		&	186,988,000		& South Korea	&	51,569,536	& Cuba			&	11,238,317	\\
Bangladesh	&	160,197,000		& Colombia		&	48,608,000	& Greece		&	10,864,979	\\
Russia		&	146,544,710		& Kenya			&	47,251,000	& Czech Republic	&	10,553,843	\\
Japan		&	126,920,000		& Spain			&	46,423,064	& Portugal		&	10,374,822	\\
Mexico		&	122,273,500		& Argentina		&	43,590,400	& Sweden		&	9,866,670	\\
Philippines	&	103,083,100		& Ukraine		&	42,738,070	& Hungary		&	9,849,000	\\
Ethiopia	&	92,206,005		& Algeria		&	40,400,000	& Austria		&	8,699,730	\\
Vietnam		&	91,700,000		& Iraq			&	36,575,000	& Israel		&	8,489,400	\\
Egypt		&	90,755,700		& Canada		&	36,048,521	& Switzerland	&	8,306,200	\\
Germany		&	81,459,000		& Saudi Arabia	&	32,248,200	& Bulgaria		&	7,202,198	\\
\noalign{\hrule height 1pt}\hline\end{tabular}
\caption{The $48$ countries that are used as alternatives in the ppl scenario, ordered by population. Retrieved from {\tt wikipedia.org} (April 2016).}
\label{tab:Populations}
\end{table} 

\begin{table}[h!]
\centering
\begin{tabular}{l r  l r  l r}
\noalign{\hrule height 1pt}\hline
cities 		    &   col index  	&  cities 		&   col index & cities    		&   col index \\\hline
San Francisco	&	111.67		& Doha			&	68.99	& Barcelona			&	47.91	\\
Zurich			& 	106.19		& Stockholm		&	66.46	& Montreal			&	46.87	\\
New York		&	100.00		& Melbourne		&	62.25	& Lagos				&	44.19	\\
Lausanne		&	93.72		& Tel Aviv		&	61.89	& Nicosia			&	41.58	\\
London			&	89.23		& Munich		&	59.72	& Athens			&	37.19	\\
Washington		&	85.67		& Rome			&	58.93	& Istanbul    		&	34.74	\\
Boston			&	80.86		& Brussels		&	58.29	& Baghdad			&	33.42	\\
Oslo			&	76.31		& Toronto		&	55.75	& Patras			&	32.21	\\
Sydney			&	75.92		& Maastricht	&	54.70	& Budapest			&	30.69	\\
Tokyo			&	74.35		& Vienna		&	52.59	& City of Mexico	&	29.20	\\
Dubai			&	73.07		& Genoa			&	51.11	& Bucharest			&	27.10	\\
Copenhagen		&	71.12		& Berlin		&	48.96	& Mumbai			&	24.64	\\
\noalign{\hrule height 1pt}\hline
\end{tabular}
\caption{The $36$ cities that are used as alternatives in the col scenario, ordered by cost of living (plus rent) index. Retrieved from {\tt numbeo.com} (April 2016).}
\label{tab:CoL}
\end{table} 

\end{document}